\documentclass{ws}

\makeatletter
\let\trimmarks\relax
\let\cropmarks\relax

\makeatother

\usepackage{graphicx}
\usepackage{amsmath,amssymb,amsfonts}
\usepackage{newtxtext, newtxmath}
\usepackage{algorithm}
\usepackage{algpseudocode}

\title{ \LARGE Decoding Algorithms for Symbol-Error Correction in MDS Array Codes via Superregular Matrices}

\author{ \Large
	D\'ebora Beatriz Claro Zanitti$^{1}$,\quad
	Isabella Silva Teixeira$^{1}$,\\
	Carina Alves$^{2}$,\quad
	Ivan Aritz Aldaya Garde$^{1}$,\\
	Cintya Wink de Oliveira Benedito$^{1}$
	\footnote{
		$^{1}$School of Engineering, S\~ao Paulo State University (UNESP), 
		S\~ao Jo\~ao da Boa Vista, S\~ao Paulo, Brazil. \\
		$^{2}$Department of Mathematics, S\~ao Paulo State University (UNESP), 
		Rio Claro, S\~ao Paulo, Brazil.\\
		E-mails: debora.zanitti@unesp.br,  isabella.s.teixeira@unesp.br, carina.alves@unesp.br, ivan.aldaya@unesp.br,  cintya.benedito@unesp.br} 
}

\begin{document}
	\maketitle
	
	\begin{abstract}
		Maximum distance separable (MDS) array codes constitute an important class of error-correcting codes due to their optimal distance properties and their relevance in distributed storage systems. In this paper, we investigate the construction and decoding of MDS array codes over $\mathbb{F}_q^b$ based on superregular matrices, with a particular emphasis on superregular Vandermonde and Cauchy matrices. We propose decoding algorithms for $[n,k,d]$ MDS array codes, where $n=m+k$ and $d=m+1$, capable of correcting symbol errors without prior knowledge of their locations. Unlike existing approaches restricted to specific parameter settings, the proposed algorithms apply to general configurations and rely on algebraic relations that do not follow from straightforward extensions of previous methods. Specifically, these algorithms correct one symbol error for $m \geq 2$ and two symbol errors for $m \geq 4$. For the two-error case, the decoding procedure admits a simplified form when Vandermonde superregular matrices are employed, thereby reducing computational complexity.We also analyze the algebraic structure of the three-symbol-error case, focusing on the most involved configuration in which all errors occur in information symbols, and we discuss how the method may be extended to the general case. These algorithms are computationally efficient for moderate parameter sizes, as they rely on structured algebraic operations over $\mathbb{F}_q^b$ and the solution of small linear systems, making them suitable for distributed storage applications. From an application perspective, the proposed approach provides a flexible alternative to RAID~6 schemes. Unlike RAID~6, which is typically limited to two parity disks and often requires prior knowledge of error locations, our construction supports general configurations and enables the correction of multiple symbol errors without location information, at the cost of increased algebraic complexity.

	\end{abstract}
	
	\keywords{MDS array codes,  Superregular matrix, Symbol error, Distributed storage system.}
	
	2020 Mathematics Subject Classification: 	94B05, 94B35, 15B33.
	
	\section{Introduction}\label{sec1}
	
	Error-correcting codes are designed to ensure reliable transmission and storage of information in the presence of noise and interference. Depending on the channel characteristics, errors may affect isolated symbols or occur in bursts, impacting multiple consecutive symbols \cite{huffman,handbook}. The basic idea of an error-correcting code is to encode information by adding redundancy in an organized way, allowing the receiver to detect and correct errors \cite{channel}. 
	
	An important property to analyze in coding theory is the minimum code distance. This metric is associated with the error-correcting ability of the code; specifically, a greater minimum distance reduces the probability of incorrect correction to the nearest codeword. Therefore, constructing codes with the maximum possible minimum distance is desirable to achieve effective error correction. Codes that achieve this property of maximum distance separation, that is, codes whose minimum Hamming distance attains the Singleton bound, are known as maximum distance separable (MDS) codes \cite{codigos}.

	Array codes are two-dimensional error-correcting codes widely used in storage systems due to their flexibility and efficiency in handling symbol-level errors \cite{handbook}. They can be constructed from simple parity-based schemes \cite{matriciais}, or algebraically, with symbols taken from a finite field, and can thus be viewed as linear codes with parameters $[n,k,d]$, where $n$ is the code length, $k$ is the dimension, and $d$ is the minimum distance. This perspective has been extensively used in the study of MDS array codes and related constructions \cite{fang}. This work focuses on algebraic constructions of MDS array codes and their decoding procedures, building on classical approaches such as Blaum--Roth codes and their variants \cite{Blaum,sara}. Recent research has investigated MDS array codes in the context of distributed storage systems, with emphasis on efficient repair schemes, sub-packetization, and locally repairable structures \cite{mds_subpacketization,fang,bargexplicit,bargcooperative}. In this setting, constructions based on Vandermonde and Cauchy matrices have also been explored for distributed storage systems \cite{sistema_armazenamento}.
	
	Distributed storage systems are commonly implemented using Redundant Array of Independent Disks (RAID) architectures \cite{Gibson, blaumevenodd}, where redundancy is introduced to ensure fault tolerance. In particular, RAID~6 schemes typically rely on constructions such as Reed--Solomon or EVENODD codes and can correct up to two disk failures (erasures) when their locations are known  \cite{marioblaum,Ping}. This limitation motivates the study of decoding algorithms capable of correcting symbol errors at unknown locations. An algebraic approach to constructing a system equivalent to RAID~6, using different techniques from those considered in this work, is presented in \cite{moussa2018raid}.
	
	Several studies have proposed decoding algorithms for MDS array codes under specific assumptions or restricted parameter settings \cite{blaumevenodd,BlaumMDS,Blaum,sara,sara2}. In particular, the work in \cite{sara} presents decoding algorithms for MDS array codes with fixed parameters, which limits the number of information blocks and, consequently, the minimum distance and error-correction capability. Increasing the error-correction capability from one to multiple errors, particularly in the absence of error-location information, is a substantially more involved problem. Extending constructions and decoding procedures from specific parameter instances to general code families widely recognized as a highly non-trivial task. Such extensions require handling complex algebraic dependencies between symbols and cannot, in general, be obtained by straightforward modifications of existing results.
	
	The main contribution of this paper is the development of explicit algebraic decoding procedures for MDS array codes with parameters $[m+k,k,m+1]$ constructed from superregular matrices. The proposed algorithms correct one symbol error for $m\geq 2$ and two symbol errors for $m\geq 4$ without assuming prior knowledge of the error locations. In contrast to approaches designed for specific code constructions, the proposed framework applies to general configurations. As a consequence of the increased complexity of this setting, we also derive a simplified decoding procedure for the case in which the underlying superregular matrix is a Vandermonde matrix, leading to simplified decoding expressions and a reduction in the number of required algebraic operations. We further examine the three-symbol-error case to illustrate the algebraic structure underlying possible extensions to higher numbers of errors, together with a corresponding simplification when Vandermonde matrices are employed.

	From an application perspective, in distributed storage systems with $k$ information disks and $m$ parity disks, RAID~6 employs $m = 2$ parity disks and can correct up to two errors only when their locations are known. Our approach considers general parameter configurations (including cases $m > k$, $m < k$, and $m = k$) and enables symbol error correction without prior knowledge of error locations. Moreover, the ability to correct more than two symbol errors allows the proposed framework to exceed the error-correction capability of RAID~6, providing a more flexible and reliable solution, albeit at the cost of increased computational complexity. From a computational perspective, the decoding procedures rely on structured algebraic operations that can be efficiently implemented, making them suitable for practical applications in distributed storage systems.

	This paper is organized as follows. In Section~\ref{initialconcepts}, we introduce the fundamental concepts used throughout the paper. Section~\ref{construção} details the  construction of the MDS array codes used in this work. Next, Section~\ref{seção_decodificação} presents decoding algorithms for MDS array codes capable of correcting up to two symbol errors, together with a simplification for the two-error case when Vandermonde matrices are employed. In Subsection~\ref{seção_correção_três_erros},  these algorithms are further examined in the context of correcting three symbol errors. Finally, Section~\ref{conclusao} summarizes the main results and discusses directions for future research.

	\section{Initial Concepts}\label{initialconcepts}
	
	A \textbf{linear code} $\mathcal{C}$ is a $K$-dimension  subspace of $\mathbb{F}_q^N$, where $\mathbb{F}_q$ is a finite field with $q$ elements. We describe the  code $\mathcal{C}$ through the parameters $[N,K,D]$, where $N$ is the code length, $K$ is the dimension and $D$ is the minimum Hamming distance. For a linear code $\mathcal{C}$ we can calculate a bound to the number of codewords, which is given by the following theorem.
	
	\begin{theorem} \label{limitante_sing} \cite{codigos}(Singleton Bound) 
		If $\mathcal{C}$ is a linear code $[N,K,D]$ over $\mathbb{F}_q$ then the maximum number of possible codewords is $q^K$ and the Singleton bound says that  
		\begin{equation}\label{limi}
			q^K\leq q^{N-D+1},
		\end{equation} 
		or equivalently
		\begin{equation}\label{limi_d} 
			D\leq N-K+1.
		\end{equation}
	\end{theorem}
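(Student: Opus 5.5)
The plan is the standard puncturing (projection) argument. First I will note that $|\mathcal{C}| = q^K$, since $\mathcal{C}$ is a $K$-dimensional subspace of $\mathbb{F}_q^N$ and so is isomorphic to $\mathbb{F}_q^K$. Next I will show that $q^K \leq q^{N-D+1}$; inequality \eqref{limi_d} then follows immediately by comparing exponents, since $q\geq 2$.

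To get the cardinality bound, I will define the projection map $\pi:\mathcal{C}\to\mathbb{F}_q^{N-D+1}$ that deletes the last $D-1$ coordinates of each codeword. This makes sense because $D\leq N$ whenever $\mathcal{C}\neq\{0\}$; the case $K=0$ is trivial. The key step is to show that $\pi$ is injective. Suppose $c,c'\in\mathcal{C}$ satisfy $\pi(c)=\pi(c')$. Then $c$ and $c'$ agree in their first $N-D+1$ positions, so they differ in at most $D-1$ positions, that is, $d(c,c')\leq D-1$. By the definition of the minimum distance $D$, distinct codewords are at distance at least $D$, so $c=c'$.

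Injectivity gives $q^K=|\mathcal{C}|\leq |\mathbb{F}_q^{N-D+1}| = q^{N-D+1}$, which is \eqref{limi}. Taking $\log_q$ gives $K\leq N-D+1$, which is equivalent to \eqref{limi_d}.

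The argument does not use linearity beyond computing $|\mathcal{C}|$. Alternatively, linearity alone would give the bound: $\ker\pi$ consists of codewords of weight at most $D-1$, so it is $\{0\}$, and then $K=\dim\mathcal{C}\leq N-D+1$ by rank–nullity. No step is a real obstacle. The only point needing care is the injectivity claim, specifically that agreement on $N-D+1$ coordinates forces Hamming distance at most $D-1$, together with the degenerate case $K=0$, where $D$ is not defined in the usual way.
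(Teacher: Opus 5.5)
Your puncturing argument is correct, and it is the standard textbook proof of the Singleton bound; the paper gives no proof of its own here and simply cites the result. Your injectivity step uses only the minimum distance, not linearity, so the same argument (deleting $d-1$ whole blocks, over the alphabet $\mathbb{F}_q^b$ of size $q^b$) also gives the array-code bound $d\le n-k+1$ that the paper invokes right after this theorem.
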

	A code in which equality is achieved on (\ref{limi_d}) is called a {\bf maximum distance separable - MDS} code,	which means that no code of length $N$ and minimum distance $D$ has more codewords than an MDS code with the same parameters.
	
	Now, if we take $b>0$ as a positive integer such that $b$ divides $K$ and $N$, then we can construct a \textbf{linear array code} $\mathcal{C}$ over $\mathbb{F}_q^b$, with parameters $[n,k,d]$, where $k = K/b$, $n = N/b$ and the minimum distance $d$ is calculated
	considering the code over $\mathbb{F}_q^b$. Such codes can be specified by their parity-check matrix $H$ of dimension $(n-k)b \times nb$,
	generating codewords of length $nb$, where each block of $b$ symbols is regarded as an element of  $\mathbb{F}_q^b$. In this setting, the error model corresponds to errors affecting entire symbols over $\mathbb{F}_q^b$, rather than errors in individual scalar components.
	The Singleton bound  is still valid in linear array codes, that is, $d\leq n-k+1$. And if equality is achieved we have an  {\bf MDS linear array code}.

	Next, we present some concepts that will be used in the construction of MDS array codes and in the decoding algorithms developed in this work. Let $\mathbb{F}_{q^b}$ be the extension field of $\mathbb{F}_q$ of degree $b$, and let $\alpha \in \mathbb{F}_{q^b}$ be a primitive element, that is, a root of a primitive polynomial $p(x) \in \mathbb{F}_q[x]$. Throughout the paper, we identify $\mathbb{F}_{q^b}$ with $\mathbb{F}_q^b$ as a vector space over $\mathbb{F}_q$ via a fixed basis, so that each symbol over $\mathbb{F}_{q^b}$ is represented as a block of $b$ components over $\mathbb{F}_q$. We denote by $M_{m \times n}(\mathbb{F}_q)$ the space of $m \times n$ matrices with entries in $\mathbb{F}_q$.
	
	\begin{definition} \label{definição_matriz_superregular}
		A matrix $A\in M_{m \times n}(\mathbb{F}_q)$ is called {\bf superregular}
		if every square submatrix of $A$ is nonsingular, that is, all its
		determinants are nonzero.
		
	\end{definition}

	\begin{definition}\label{bblock} 
		A matrix $A\in M_{mb \times nb}(\mathbb{F}_q)$ is said to be a
		{\bf superregular $b$-block matrix} if every square submatrix of $A$
		formed by full $b \times b$ blocks is nonsingular over $\mathbb{F}_q$.
		
	\end{definition}
	
	In the context of coding theory, superregular matrices with entries in the finite field $\mathbb{F}_q$ can be used to generate linear codes and linear array codes with good distance properties.
	
	\begin{proposition}\label{propmds}\cite{Roth}
		Let $\mathcal{C}$ be a linear code $[N,K,D]$ over a finite field $\mathbb{F}_q$. The following conditions are equivalent to $\mathcal{C}$ being  MDS:
		\begin{enumerate}
			
			\item Every set of $N-K$ columns of a $H$ parity-check matrix of $\mathcal{C}$ is linearly independent.
			
			\item Every set of $K$ columns of a generator matrix of $\mathcal{C}$ is linearly independent.
			
			\item The dual code $\mathcal{C}^\perp$ is MDS.
			
			\item The code $\mathcal{C}$ has a generator matrix in the systematic form  $G=(I_K | A)$, where $I_K$ is the identity matrix $K\times K$ and $A$ is a superregular matrix.
		\end{enumerate}
	\end{proposition}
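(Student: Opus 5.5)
Let $\mathcal{C}$ be an $[N,K,D]$ linear code with parity-check matrix $H$ of size $(N-K)\times N$. The plan is to prove the cycle (1)$\Leftrightarrow$MDS, then (2)$\Leftrightarrow$MDS, then (3) via duality, and finally (4) through minors of the systematic generator.

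\textbf{Condition (1).} The minimum distance $D$ equals the smallest number of linearly dependent columns of $H$. A nonzero codeword of weight $w$ is exactly a dependence among $w$ columns of $H$, since $Hc^T=0$.
\begin{itemize}
\item If every $N-K$ columns are independent, then $D\ge N-K+1$.
\item Theorem~\ref{limitante_sing} gives $D\le N-K+1$, so equality holds.
\item Conversely, if $D=N-K+1$, no nonzero codeword has weight at most $N-K$, so every set of $N-K$ columns is independent.
\end{itemize}

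\textbf{Condition (2).} Let $G$ be a $K\times N$ generator matrix and fix a set $S$ of $K$ coordinates. The columns $G_S$ are dependent iff some nonzero message $u$ has $uG$ vanishing on $S$. Such a $uG$ is a nonzero codeword of weight at most $N-K$, because $G$ has full rank $K$. Hence all $K$-subsets of columns are independent iff every nonzero codeword has weight at least $N-K+1$, i.e. iff $\mathcal{C}$ is MDS.

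\textbf{Condition (3).} $H$ is a generator matrix of $\mathcal{C}^\perp$, an $[N,N-K]$ code. By (2) applied to $\mathcal{C}^\perp$, the dual is MDS iff every $N-K$ columns of $H$ are independent. By (1) this holds iff $\mathcal{C}$ is MDS.

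\textbf{Condition (4).} This is the step requiring real work: translating independence of column sets into nonvanishing of all minors of $A$.

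First, any MDS code has a systematic generator matrix. By (2) the first $K$ columns of $G$ form an invertible block $G_1$, and $G_1^{-1}G=(I_K\mid A)$ generates the same code. Similarly, if $G=(I_K\mid A)$ is given, its first $K$ columns are trivially independent. So it suffices to show: every $K$-subset of columns of $(I_K\mid A)$ is independent iff $A$ is superregular.

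Fix a $K$-subset $S$. Let it consist of $K-r$ identity columns $e_i$, for $i\in I$, together with $r$ columns of $A$ indexed by $J$, where $|J|=r$. Expanding the determinant along the identity columns (Laplace expansion, or row-reducing them away) gives
\begin{equation*}
\det\bigl[(I_K\mid A)_S\bigr]=\pm\det A[\bar I, J],
\end{equation*}
where $\bar I$ is the complement of $I$, with $|\bar I|=r$.

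As $S$ ranges over all $K$-subsets, the pair $(\bar I,J)$ ranges over all pairs of equal-size row and column index sets of $A$, with $r$ from $0$ to $\min(K,N-K)$. Therefore all these determinants are nonzero iff every square submatrix of $A$ is nonsingular, i.e. $A$ is superregular in the sense of Definition~\ref{definição_matriz_superregular}. Combined with (2), this gives the equivalence with MDS.

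The main care needed is in verifying that this correspondence between $K$-subsets $S$ and square submatrices of $A$ is a bijection onto all square submatrices, including the $r=0$ case, which gives the trivial identity minor.
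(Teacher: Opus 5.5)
The paper gives no proof of this proposition; it only cites Roth. Your argument is correct and is essentially the standard textbook proof found there: you characterize $D$ through column dependences of $H$ for (1), you use the zero coordinates of $uG$ for (2), you apply (2) to $H$ as a generator matrix of $\mathcal{C}^\perp$ for (3), and you set up the Laplace-expansion correspondence between the $K\times K$ minors of $(I_K\mid A)$ and the square minors of $A$ for (4). One point deserves explicit mention: your proof of (2) holds for every full-rank generator matrix of $\mathcal{C}$, and this is exactly what lets you apply it to the particular matrix $(I_K\mid A)$ in (4), in both directions.
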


	The Proposition \ref{propmds} can also be extended to linear array MDS codes $[n,k,d]$ by using superregular $b$-block matrices. Superregular matrices are not easy to find in general \cite{sara2}. We will consider particular examples of superregular matrices, specifically the Vandermonde and Cauchy matrices \cite{cauchy,roth2}. 
	
	Let $\alpha_1,..., \alpha_{n-k}$ be nonzero elements in   $\mathbb{F}_q$. A general form of a Vandermonde matrix $A\in\mathbb{F}_q^{k\times (n-k)}$ is given by
	
	\begin{equation} \label{vandermonde}
		A = \left [
		\begin{array}{c c c c}
			1 & 1 & ... & 1 \\
			\alpha_1 & \alpha_2 & \ldots & \alpha_{n-k}  \\
			\alpha_1^2 & \alpha_2^2 & \ldots & \alpha_{n-k}^2  \\
			\vdots & \vdots & \ddots & \vdots  \\
			\alpha_1^{k-1} & \alpha_2^{k-1} & \ldots & \alpha_{n-k}^{k-1} 
		\end{array}
		\right ].
	\end{equation}
	
	We emphasize that, over finite fields, a Vandermonde matrix is not automatically superregular; this property depends on the choice of evaluation points and must be verified.

	\begin{definition}\label{defcauchy}
		Let $x_i$ and $y_j$ elements in a field $\mathbb{F}$, for $i=1,\cdots,m$ and $j=1,\cdots,n$. A matrix $A$ of dimensions $m \times n$ is called \textbf{Cauchy matrix}, if and only if,
		\begin{equation}\label{Cauchy}
			A_{ij} = \frac{1}{x_i + y_j} ,
		\end{equation}
		where $x_i$ + $y_j\neq\textbf{0}$, $x_i$ are distinct for all $i=1,\cdots,m$ and $y_j$ are distinct for all $j=1,\cdots,n$.
	\end{definition}

	\begin{definition} Let $p(x) = x^{b} + p_{b-1}x^{b-1} + ... +  p_{1}x +  p_{0} \in \mathbb{F}_q[x]$ be a primitive polynomial. We can associate $p(x)$ with a matrix called \textbf{Frobenius companion matrix} $C$, which is a matrix that contains 1's (ones) in the sub-diagonal,  the coefficients of $-p(x)$ in the last column and all other elements being zero, as follows

		\begin{equation}\label{matrizC}
			C = \left [
			\begin{array}{c c c c c}
				0 & 0 & \ldots & 0 & -p_{0} \\
				1 & 0 & \ldots & 0 & -p_{1} \\
				\vdots & \ddots & \ddots & \vdots & \vdots \\
				0 & 0 & \ldots & 0 & -p_{b-2}\\
				0 & 0 & \ldots & 1 & -p_{b-1}
			\end{array}
			\right ].
		\end{equation}
	\end{definition}
	
	\begin{definition}  \label{definição_zech}  If $\alpha$ is a primitive element of a finite field $\mathbb{F}_q$, then the {\bf Zech logarithm} $Z$ of an integer $n$ with respect to the base $\alpha$ is defined by the equation
		\begin{equation} \label{zech}
			Z(n) = log_{\alpha}(1 + \alpha^n)
			\ \ \ \Longleftrightarrow \ \ \             \alpha^{Z(n)} = 1 + \alpha^{n},
		\end{equation}
		where the result of logarithm is limited to the field.
	\end{definition}
	
	Some properties of this logarithm can be found in \cite{Zech}. The sum of powers of a primitive element $\alpha$ can be done alternatively using the Zech logarithm. Let $\alpha^x$ and $\alpha^y$ be primitive elements of $\mathbb{F}_q$. If $z$ is the exponent of the sum $\alpha^x + \alpha^y$, that is, $\alpha^x + \alpha^y = \alpha^z$, then
	\begin{equation}\label{propriedade_zech}
		z= x + Z(y-x) = y + Z(x-y).
	\end{equation}
	
	\section{Constructions of MDS Array Codes} \label{construção}
	
	In this section we will present the encoding of MDS array codes based on \cite{Blaum,sara}. We start by defining  an isomorphism that takes a primitive element of a finite field into a Frobenius companion matrix.
	
	Let us consider $M_{m \times n}(\mathbb{F}_q)$  and $\alpha \in \mathbb{F}_{q^b}$, with $b$ a positive integer. There is a field isomorphism $\phi: \mathbb{F}_{q^b}\longrightarrow \mathbb{F}_q[C]$ defined by $\phi(\alpha)=C$, where $C\in M_{b\times b}(\mathbb{F}_q)$ is a Frobenius companion matrix. We can extend to a ring isomorphism defined by
	
	\begin{equation}\label{iso} \begin{array}{cccc}
			\psi: & M_{m \times n}(\mathbb{F}_{q^b}) &\longrightarrow& M_{mb \times nb}(\mathbb{F}_q)\\
			& A=[a_{ij}] & \longmapsto& \psi(A)=[\phi(a_{ij})]=[C^{\sigma(i,j)}]\end{array},
	\end{equation}
	where  $\sigma(i,j)$ is the power of $\alpha$ in each element $a_{ij}$ of $A$.
	
	Let $\mathcal{C}$ be a linear array code over $\mathbb{F}_q^b$ with parameters $[n,k,d]$. If $d$ satisfies the equality in (\ref{limi_d}), then $\mathcal{C}$ is an MDS array code. The following result provides a condition for $\mathcal{C}$ to be MDS and presents a parity-check matrix for $\mathcal{C}$.

	\begin{theorem}\label{teoSara} \cite{Roth,sara}
		If $A = [a_{ij}] \; \in \; M_{(n-k) \times k}(\mathbb{F}_{q^b})$ is a  superregular $b$-block matrix, then $H= \big[\psi(A) \ | \  I_{(n-k)b}\big]$ is a  parity-check matrix of a $[n, k, n- k + 1]$  MDS linear array code $\mathcal{C}$, where $\psi$ is the isomorphism given in (\ref{iso}).
	\end{theorem}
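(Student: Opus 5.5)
We argue directly from the block structure of $H=\big[\psi(A)\mid I_{(n-k)b}\big]$, which is an $(n-k)b\times nb$ matrix over $\mathbb{F}_q$. Its columns fall into $n$ block columns of width $b$, the first $k$ coming from $\psi(A)$ and the last $n-k$ from the identity. A codeword is a vector $c=(c_1,\dots,c_n)$ with $c_j\in\mathbb{F}_q^b$ and $Hc^T=0$. Its weight over $\mathbb{F}_q^b$ is the number of nonzero blocks $c_j$. Since $H$ has full row rank $(n-k)b$ because of the identity part, the code has $\mathbb{F}_q$-dimension $kb$, so its dimension over $\mathbb{F}_q^b$ is $k$. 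By the Singleton bound (Theorem~\ref{limitante_sing}, which holds for array codes as noted), $d\le n-k+1$. It therefore suffices to show that no nonzero codeword has at most $n-k$ nonzero blocks. This is the block analogue of item 1 of Proposition~\ref{propmds}: any $n-k$ block columns of $H$, i.e.\ any $(n-k)b$ scalar columns forming full blocks, must be linearly independent over $\mathbb{F}_q$.

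Fix a set $S$ of $n-k$ block columns. Let $S$ contain $s$ block columns from $\psi(A)$, indexed by $J\subseteq\{1,\dots,k\}$, and $n-k-s$ block columns from the identity, indexed by $R\subseteq\{1,\dots,n-k\}$. We want the $(n-k)b\times(n-k)b$ submatrix $H_S$ to be nonsingular. Permute its block rows so that the rows in $R$ come first. The identity block columns are unit columns supported exactly on the block rows of $R$. Expanding the determinant along those unit columns (equivalently, doing block Gaussian elimination) gives $\det H_S=\pm\det \psi(A)_{\bar R,J}$, where $\bar R$ is the complement of $R$ in $\{1,\dots,n-k\}$. Since $|\bar R|=s=|J|$, this is an $sb\times sb$ submatrix of $\psi(A)$ formed by full $b\times b$ blocks. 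When $s=0$ the submatrix is empty and $H_S$ is the identity, so there is nothing to prove.

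The remaining step is to show that $\psi(A)_{\bar R,J}$ is nonsingular. The hypothesis says $A$ is a superregular $b$-block matrix. Read as in Definition~\ref{bblock} applied to $\psi(A)$, this means every square submatrix of $\psi(A)$ formed by full blocks is nonsingular, and that is exactly what we need. If the hypothesis is instead read as $A$ being superregular over $\mathbb{F}_{q^b}$ (Definition~\ref{definição_matriz_superregular}), we would bridge the two readings as follows. The map $\psi$ restricted to the submatrix is the ring homomorphism $M_s(\mathbb{F}_{q^b})\to M_{sb}(\mathbb{F}_q)$ induced by $\phi$. Since $\phi$ identifies $\mathbb{F}_{q^b}$ with the commutative field $\mathbb{F}_q[C]$, one can use the block determinant identity for commuting blocks: $\det\psi(A_{\bar R,J})=\det\phi(\det A_{\bar R,J})$. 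The right side is nonzero, because $\phi$ of a nonzero field element is an invertible matrix. Alternatively, $\psi$ is a ring isomorphism onto its image, so an inverse of $A_{\bar R,J}$ maps to an inverse of $\psi(A_{\bar R,J})$. This bridging is the only genuinely delicate point. I would present the ring-isomorphism argument because it avoids the commuting-block determinant formula.

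With $H_S$ nonsingular for every choice of $S$, suppose a nonzero codeword $c$ had support contained in some set $S$ of $n-k$ block positions. Then $H_S c_S^T=0$ with $c_S\neq 0$, which contradicts nonsingularity. Hence $d\ge n-k+1$, and together with the Singleton bound $d=n-k+1$. So $H$ is a parity-check matrix of an $[n,k,n-k+1]$ MDS linear array code.
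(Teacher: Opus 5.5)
Your proof is correct. The paper gives no proof of this theorem: it cites the sources and remarks that Proposition~\ref{propmds} extends blockwise to array codes. Your argument is exactly that block analogue of item~1. Laplace expansion along the identity block columns reduces any $n-k$ block columns of $H$ to a full-block square submatrix $\psi(A_{\bar R,J})$ of $\psi(A)$, and the Singleton bound closes the argument.

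Your observation that $\psi$ is a unital ring homomorphism is a useful addition. It transfers invertibility of $A_{\bar R,J}$ over $\mathbb{F}_{q^b}$ to $\psi(A_{\bar R,J})$, which settles the ambiguity in the hypothesis, since Definition~\ref{bblock} is stated for matrices over $\mathbb{F}_q$. It is also the same fact the paper later invokes without proof when it asserts, in the proof of Theorem~\ref{casos2erros}, that the $2b\times 2b$ coefficient matrix is nonsingular.
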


	Henceforth we assume $q=2$, that is,  binary codes and linear codes. Then, $\mathcal{C}$ is an MDS array code over $\mathbb{F}_2^b$. The  code parameters become $[n,k,n-k+1]=[m+k,k,m+1]$, where $m=n-k$. The parity-check matrix  is given by  
	\begin{equation}\label{equação_matrizH}	H = \big[\psi(A) \ | \  I_{mb}\big] = \left [
		\begin{array}{c c c c  | c}
			A_{11} & A_{12} & \ldots & A_{1k}  & \\
			A_{21} & A_{22} & \ldots & A_{2k}  & \\
			\vdots & \vdots & \ddots & \vdots &  I_{mb} \\
			A_{m1} & A_{m2} & \ldots & A_{mk} &
		\end{array}
		\right ], \end{equation}
	with $A_{ij}=C^{\sigma(i,j)}$, where $C\in M_{b\times b}(\mathbb{F}_2)$ is a Frobenius companion matrix, $\sigma(i,j)$ is the power $\alpha$ of each element $a_{ij}$ of a superregular matrix $A$ of dimension $m \times k$ and $I_{mb}$ is the identity matrix of dimension $mb$.

	\begin{example} \label{Exemplo: construção H}
		Let $p(x)= x^5 + x^3 + x^2 + x + 1 \; \in \; \mathbb{F}_2[x]$ be a primitive polynomial with primitive element $\alpha \in \mathbb{F}_{2^5}$ and let us construct an MDS array code $\mathcal{C}$ with parameters $[10,5,6]$ over $\mathbb{F}_{2}^5$, that is, with $m=k=b=5$. The superregular Vandermonde matrix and the Frobenius companion matrix associated to $p(x)$ are, respectively, given by
		\[
		A = \left [
		\begin{array}{c c c c c}
			1 & 1 & 1 & 1 & 1 \\
			\alpha & \alpha^2 & \alpha^3 & \alpha^4 & \alpha^5 \\
			\alpha^2 & \alpha^4 & \alpha^6 & \alpha^8 & \alpha^{10} \\
			\alpha^3 & \alpha^6 & \alpha^9 & \alpha^{12} & \alpha^{15} \\
			\alpha^4 & \alpha^8 & \alpha^{12} & \alpha^{16} & \alpha^{20} \\
		\end{array}
		\right ]  \ \ \mbox{and} \ \ \  C =
		\left [
		\begin{array}{c c c c c}
			0 & 0 & 0 & 0 & 1 \\
			1 & 0 & 0 & 0 & 1 \\
			0 & 1 & 0 & 0 & 1 \\
			0 & 0 & 1 & 0 & 1\\
			0 & 0 & 0 & 1 & 0
		\end{array}
		\right ].
		\]
		Applying the isomorphism  (\ref{iso}) we have that
		
		\[
		\psi(A) = \left[ 
		\begin{array}{c c c c c}
			I_{5} & I_{5} & I_{5}  & I_{5}  & I_{5}   \\
			C     & C^2   & C^3    & C^4    & C^5     \\
			C^2   & C^4   & C^6    & C^8    & C^{10}  \\
			C^3   & C^6   & C^9    & C^{12} & C^{15}  \\
			C^4   & C^8   & C^{12} & C^{16} & C^{20} 
		\end{array}
		\right ].
		\]
		Thus, according to (\ref{equação_matrizH}), the parity-check matrix for an MDS array code $\mathcal{C}$ is given by
		\[H = \left [
		\begin{array}{c c c c c | c}
			I_{5} & I_{5} & I_{5} & I_{5} & I_{5} &   \\
			C & C^2 & C^3 & C^4 & C^5  & \\
			C^2 & C^4 & C^6 & C^8 & C^{10} &  I_{25} \\
			C^3 & C^6 & C^9 & C^{12} & C^{15} &  \\
			C^4 & C^8 & C^{12} & C^{16} & C^{20} &
		\end{array}
		\right ]. \] \\
	\end{example}
	\section{Decoding of MDS Array Codes}\label{seção_decodificação}
	
	This section presents two decoding algorithms for MDS array codes with
	parameters $[n,k,n-k+1]=[m+k,k,m+1]$: one for correcting a single symbol
	error over $\mathbb{F}_2^b$ when $m \geq 2$, and another for correcting
	two symbol errors over $\mathbb{F}_2^b$ when $m \geq 4$, where
	$m = n - k$, without assuming knowledge of the error locations. A
	simplified version of the decoding algorithm for correcting two symbol errors is presented when the
	superregular matrix is chosen as a Vandermonde matrix.
	
	It is well known that linear block codes with parameters $[N,K,D]$ can correct up to $\left\lfloor\frac{D-1}{2}\right\rfloor$ symbol errors \cite{huffman}. Similarly, MDS array codes with parameters $[n,k,d]$, where $k = K/b$ and $n = N/b$, can correct up to $\left\lfloor\frac{d-1}{2}\right\rfloor$ symbol errors over $\mathbb{F}_2^b$. In Subsection~\ref{seção_correção_três_erros}, we present the progress made toward extending the proposed algorithms to
	handle three symbol errors, as well as the main obstacles encountered
	in this generalization. A simplified form is also presented for the case of three symbol errors when a Vandermonde matrix is used.

	Let $\mathcal{C}$ be an MDS array code over $\mathbb{F}_2^b$ constructed as in Section \ref{construção}. If $\textbf{c}=[\textbf{c}_1 \ \textbf{c}_2 \ \ldots \ \textbf{c}_n]$ is a transmitted codeword and $\textbf{v}=[\textbf{v}_1 \ \textbf{v}_2 \ \ldots \ \textbf{v}_n]$ is a received vector then $\textbf{e}= \textbf{v} - \textbf{c}$ is the error vector, where $\textbf{c}_j,\textbf{v}_j,\textbf{e}_j\in\mathbb{F}_2^b$, for $j=1,\ldots,n$. The \textbf{syndrome vector} \textbf{s} of \textbf{v} is defined by

	\begin{equation}
		\textbf{s}^T = H \cdot \textbf{v}^T = [\textbf{s}_1 \; \; \textbf{s}_2 \; \; \hdots \textbf{s}_m],
	\end{equation} 
	where $\textbf{s}_i\in\mathbb{F}_2^b$, for $i=1,\ldots,m$, and $H$ is the parity-check matrix given in (\ref{equação_matrizH}). Thus:
	\begin{equation} \label{sindrome}
		\textbf{s}_i^T= \sum_{j=1}^{k}A_{ij}\textbf{v}_{j}^T + \textbf{v}_{k+i}^T= \sum_{j=1}^{k}A_{ij}\textbf{e}_j^T + \textbf{e}_{k+i}^T,
	\end{equation}
	for each $i=1,\ldots,m$.
	
	Next, we present Algorithm~\ref{algoritmo_1_erro}, which corrects a
	single symbol error when $m \geq 2$.
	
	
	\begin{algorithm}\caption{Algorithm to correct one symbol error} \label{algoritmo_1_erro}    \textbf{Input:}\\ Positive integers $m$ and $k$, with $m \geq 2$; and a received vector $\textbf{v}=[\textbf{v}_1,\ldots,\textbf{v}_n]$, where  $n=m+k$.
		
		\textbf{Output:}\\ The corrected codeword $\textbf{c}$, or a declaration
		that more than one symbol error has occurred.

		\begin{enumerate}
			\item  \label{Passo1_1} Compute  $\textbf{s}=[\textbf{s}_1 \; \; \textbf{s}_2 \; \; \ldots \textbf{s}_m]$ according to  Equation (\ref{sindrome}). 
			
			\item \label{Passo2_1} If there exists $j \in \{1,\ldots,m\}$ such that $\textbf{s}_j \neq \textbf{0}$ and $\textbf{s}_i = \textbf{0}$ for all $i \neq j$, then set $\textbf{e}_{k+j} = \textbf{s}_j$ and conclude that the error occurred at position $k+j$.
			Otherwise, set $l_1 = 0$.
			
			\item \label{Passo3_1} Set $l_1 = l_1 + 1$. If $l_1 > k$, then terminate the algorithm and declare that more than one symbol error has occurred (decoding failure). Otherwise, proceed to the next step.

			\item \label{Passo4_1} Compute the vectors:
			
			\begin{equation}\label{calculoy}
				\begin{aligned}
					\textbf{y}_1^T &= \textbf{s}_1^T + A_{1{l_1}}A_{m{l_1}}^{-1}\textbf{s}_m^T         \\
					\textbf{y}_2^T &= \textbf{s}_2^T + A_{2{l_1}}A_{1{l_1}}^{-1}\textbf{s}_1^T         \\
					\textbf{y}_3^T &= \textbf{s}_3^T + A_{3{l_1}}A_{2{l_1}}^{-1}\textbf{s}_2^T         \\
					\vdots \\
					\textbf{y}_m^T &= \textbf{s}_m^T + A_{m{l_1}}A_{(m-1){l_1}}^{-1}\textbf{s}_{m-1}^T \\
				\end{aligned}.
			\end{equation}
			
			\item \label{Passo5_1} If $\textbf{y}_1 = \textbf{y}_2 = \cdots = \textbf{y}_m = \textbf{0}$, then there is a symbol error at position $l_1$, with magnitude given by
			\begin{equation} \label{magnitude_erros}
				\textbf{e}_{l_1}^T = A_{i l_1}^{-1}\textbf{s}_i^T,
			\end{equation}
			for any $i=1,\ldots,m$. Otherwise, return to Step~\ref{Passo3_1}.
			
			\item Correct the received vector $\textbf{v}$ by setting $\textbf{c} = \textbf{v} - \textbf{e}$.   
		\end{enumerate}
	\end{algorithm}

	Note that if the syndrome vector $\textbf{s}$ is zero, then no error has occurred. In this case, the algorithm returns $\textbf{c} = \textbf{v}$.

	\begin{example}\label{exemplo1erro}
		Let $p(x)= x^3 + x^2 + 1\in\mathbb{F}_2[x]$ be a primitive polynomial with a primitive element $\alpha \in\mathbb{F}_{2^3}$. The superregular Vandermonde matrix and the Frobenius companion matrix are, respectively, given by 
		\[
		A = \left[ 
		\begin{array}{c c}
			1  & 1  \\
			\alpha  & \alpha^2   
		\end{array}
		\right ] \ \ \mbox{and} \ \ \ 	C= 			\left[ 
		\begin{array}{c c c}
			0  & 0 & 1  \\
			1  & 0 & 0  \\
			0  & 1 & 1  
		\end{array}
		\right ].
		\]
		Thus, according to (\ref{equação_matrizH}),  we can construct an MDS array code $\mathcal{C}$ with parameters $[4,2,3]$ over $\mathbb{F}_2^3$ with parity-check matrix:
		\[
		H =  \left[ 
		\begin{array}{c c | c}
			I_3 & I_3  & \vspace{-0.1cm} \\ \vspace{-0.1cm}
			& & I_6 \\
			C & C^2  & 
		\end{array}
		\right] \]
		
		\noindent This code can correct a single symbol error of length $b=3$, since $\lfloor\frac{m}{2}\rfloor = \lfloor\frac{2}{2}\rfloor = 1$.   
		Assume the received vector $ \textbf{v} = [110 \;\; 110 \;\; 011 \;\; 011]$. According to Step 1 of the Algorithm \ref{algoritmo_1_erro}, we calculate the syndromes using the Equation (\ref{sindrome}):
		\[
		\textbf{s}_1=[0 \ 1 \ 1] \ \ \mbox{and} \ \ \textbf{s}_2=[1 \ 0 \ 0].
		\]
		Since the syndromes obtained are all nonzero, errors may have occurred in the information symbols. So, we make $l_1=1$ and calculate the vectors as in Equation (\ref{calculoy}):  
		\[
		\textbf{y}_1=[0 \ 0 \ 0] \ \ \mbox{and} \ \ \textbf{y}_2=[0 \ 0 \ 0].
		\]
		As both vectors are zero, the error in position $l_1=1$ is confirmed. To obtain the magnitude of this error, calculate the Equation (\ref{magnitude_erros}):
		
		\[
		\textbf{e}_{1}^T= A_{11}^{-1} \textbf{s}_1^T = I_3^{-1} \left [
		\begin{array}{c}
			0 \\
			1 \\
			1 \\
		\end{array}
		\right ] = \left [
		\begin{array}{c c c}
			1 & 0 & 0 \\
			0 & 1 & 0 \\
			0 & 0 & 1 \\
		\end{array}
		\right ] \left [
		\begin{array}{c}
			0 \\
			1 \\
			1 \\
		\end{array}
		\right ] = \left [
		\begin{array}{c}
			0 \\
			1 \\
			1 \\
		\end{array}
		\right ] .
		\]
		Therefore,
		\[
		\textbf{c} = [110 \;\; 110 \;\; 011 \;\; 011] 
		- [\textbf{011} \;\; 000 \;\; 000 \;\; 000] = [101 \;\; 110 \;\; 011 \;\; 011]
		\]
		is the codeword transmitted.
	\end{example}

	Now, we present the algorithm for correcting up to two symbol errors, which requires $m \geq 4$. This algorithm incorporates specific steps of the Algorithm \ref{algoritmo_1_erro}.
	
	\begin{algorithm}\caption{Algorithm to correct two symbol errors} \label{algoritmo_dois_erros}    \textbf{Input:}\\ Positive integers $m$ and $k$, with $m \geq 4$; and a received vector $\textbf{v}=[\textbf{v}_1,\ldots,\textbf{v}_n]$, where  $n=m+k$.
		
		\textbf{Output:}\\ 
		The corrected codeword $c$, or a declaration that more than two symbol errors have occurred, or that decoding has failed.

		\begin{enumerate}
			
			\item \label{Passo1} Compute $\textbf{s}=[\textbf{s}_1 \; \; \textbf{s}_2 \; \; \ldots \textbf{s}_m]$ according to Equation (\ref{sindrome}).
			
			\item \label{Passo2} If there exists $j \in \{1,\ldots,m\}$ such that $\textbf{s}_j \neq \textbf{0}$ and $\textbf{s}_i = \textbf{0}$ for all $i \neq j$, then set $\textbf{e}_{k+j} = \textbf{s}_j$ and conclude that the error occurred at position $k+j$. Otherwise, proceed to the next step.
			
			\item \label{Passo3} If there exist $j_1, j_2 \in \{1,\ldots,m\}$ with $j_1 \neq j_2$ such that $\textbf{s}_{j_1} \neq \textbf{0}$, $\textbf{s}_{j_2} \neq \textbf{0}$, and $\textbf{s}_i = \textbf{0}$ for all $i \neq j_1, j_2$, then set $\textbf{e}_{k+j_1} = \textbf{s}_{j_1}$ and $\textbf{e}_{k+j_2} = \textbf{s}_{j_2}$, and conclude that two symbol errors occurred at positions $k+j_1$ and $k+j_2$. Otherwise, set $l_1 = 0$.
			
		\end{enumerate}		
	\end{algorithm}
	\begin{algorithm}
		\begin{enumerate}
			
			\item[4.] \label{Passo4} Set $l_1 = l_1 + 1$. If $l_1 > k$, then terminate the algorithm and declare that more than two symbol errors have occurred (decoding failure). Otherwise, proceed to the next step.

			\item[5.] \label{Passo5} Compute the vectors of Equation (\ref{calculoy}).

			\item[6.] \label{Passo6} If $\textbf{y}_1 = \textbf{y}_2 = \cdots = \textbf{y}_m = \textbf{0}$, then there is a symbol error at position $l_1$, with magnitude given by
			\begin{equation}\label{erro_posicao_l1}
				\textbf{e}_{l_1}^T = A_{i l_1}^{-1} \textbf{s}_i^T,
			\end{equation}
			for any $i=1,\ldots,m$. Otherwise, proceed to the next step.
			
			\item[7.] \label{Passo7} If $(\textbf{y}_i,\textbf{y}_{i+1})=(\textbf{0},\textbf{0})$ for some $i=1,\ldots,m-1$ or $(\textbf{y}_m,\textbf{y}_1)=(\textbf{0},\textbf{0})$, but not all $\textbf{y}_i$ are zero, then one symbol error occurs at position $l_1$ and a second symbol error occurs in the parity symbols, given by
			\begin{equation}\label{Passo7:erro_informação}
				\textbf{e}_{l_1}^T = A_{i l_1}^{-1} \textbf{s}_i^T
			\end{equation}
			and
			\begin{equation}\label{Passo7:erro_paridade}
				\textbf{e}_{k+j} = \textbf{s}_j - A_{j l_1} \textbf{e}_{l_1},
			\end{equation}
			for any $i \neq j$, where $j$ denotes the index corresponding to the nonzero syndrome component associated with the parity error. Otherwise, proceed to the next step.

			\item[8.] \label{Passo8} For each $l_2 \in \{l_1 + 1, \ldots, k\}$, check whether
			\begin{equation} \label{eq_y} 
				\textbf{y}_i^T = C^{r_{i-2}}\textbf{y}_{i-1}^T,
			\end{equation} 
			for all $i= 3, 4, \ldots, m$ with 
			\begin{equation} \label{eq_r}
				\begin{aligned}
					r_{i-2} = & \hspace{.15cm} \sigma(i,l_2) - \sigma(i-1,l_2) + Z(\sigma(i,l_1) -\sigma(i,l_2) - \sigma(i-1,l_1) + \sigma(i-1,l_2) ) \\  
					&  - Z(\sigma(i-1,l_1) - \sigma(i-1,l_2) -\sigma(i-2,l_1)   + \sigma(i-2,l_2)).            \end{aligned}
			\end{equation}
			If this condition holds, then two symbol errors are declared at positions $l_1$ and $l_2$. Otherwise, return to Step~\ref{Passo4}.
			
			\item[9.] \label{Passo9} To determine the magnitudes of the errors, solve
			the linear system.
			\begin{equation} \label{sistema}
				\left \{ \begin{matrix} A_{il_1}\textbf{e}_{l_1}^T + A_{il_2}\textbf{e}_{l_2}^T = \textbf{s}_i^T  \\ A_{jl_1}\textbf{e}_{l_1}^T + A_{jl_2}\textbf{e}_{l_2}^T = \textbf{s}_j^T  \end{matrix} \right.,        
			\end{equation}
			for any $i,j = 1, 2,\ldots, m$ with $i \neq j$.

			\item[10.] Correct the received vector $\textbf{v}$ by setting
			$\textbf{c} = \textbf{v} - \textbf{e}.$ 
		\end{enumerate}		
	\end{algorithm}

	The following theorem demonstrates that Algorithm~\ref{algoritmo_dois_erros} indeed corrects up to two symbol errors of length $b$.

	\begin{theorem}\label{casos2erros}
		Let $\mathcal{C}$ be an MDS array code over $\mathbb{F}_2^b$ with
		parameters $[m+k,k,m+1]$, where $m \ge 4$ and $k,b$ are positive integers, and
		let $H$ be its parity-check matrix given in
		(\ref{equação_matrizH}). Then Algorithm~\ref{algoritmo_dois_erros}
		corrects up to two symbol errors with unknown locations.
	\end{theorem}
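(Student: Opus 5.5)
The plan is a case analysis on the locations of the (at most two) erroneous symbols. For each case I would show that the algorithm reaches the correct branch and outputs the correct error vector. Every syndrome component is computed from \eqref{sindrome}, $\textbf{s}_i^T=\sum_j A_{ij}\textbf{e}_j^T+\textbf{e}_{k+i}^T$. Throughout I use only two facts from Theorem~\ref{teoSara}: every block $A_{ij}=C^{\sigma(i,j)}$ is invertible, and every $2\times 2$ block submatrix of $\psi(A)$ is nonsingular. The latter also means $A_{il_1}A_{jl_1}^{-1}+A_{il_2}A_{jl_2}^{-1}$ is invertible, being a power of $C$ times $C^{x}+C^{y}$ with $x\neq y$. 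Since $m\ge 4$, the code has $d=m+1\ge 5$, so the error is uniquely decodable. It therefore suffices to show three things: the branch taken identifies the true support, earlier steps never fire incorrectly for the true error pattern, and the recovered magnitudes satisfy the syndrome equations.

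\emph{Cases.}
\begin{enumerate}
\item \textbf{Errors only in parity symbols.} Then $\textbf{s}_i=\textbf{e}_{k+i}$, so Steps~2--3 correct them. Conversely, if an information symbol is in error, then all $m$ syndromes are nonzero for one information error. With one information error plus one parity error, at least $m-1\ge 3$ syndromes are nonzero. With two information errors, at most one syndrome can vanish: if $\textbf{s}_i=\textbf{s}_j=\textbf{0}$ for $i\neq j$, the $2\times2$ block system would force $\textbf{e}_{l_1}=\textbf{e}_{l_2}=\textbf{0}$. Hence Steps~2--3 never misfire.
\item \textbf{Single information error at $l$.} Here $\textbf{s}_i^T=A_{il}\textbf{e}_l^T$, so each $\textbf{y}_i$ vanishes exactly when $l_1=l$. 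For $l_1\ne l$, $\textbf{y}_i^T=(A_{il}+A_{il_1}A_{(i-1)l_1}^{-1}A_{(i-1)l})\textbf{e}_l^T$, which is nonzero by the $2\times2$ superregularity, so no wrong $l_1$ passes Step~6 or Step~7.
\item \textbf{Information error at $l$ plus parity error at $k+j$.} When $l_1=l$, only $\textbf{y}_j$ and $\textbf{y}_{j+1}$ (cyclically) are nonzero. Since $m\ge4$, some cyclically consecutive pair of zeros exists, which triggers Step~7, and the formulas \eqref{Passo7:erro_informação}--\eqref{Passo7:erro_paridade} follow directly. I must also rule out a smaller wrong $l_1$ triggering Step~7. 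For $l_1\neq l$, every $\textbf{y}_i$ not involving $j$ is nonzero, again by the $2\times2$ minor argument, and at most two $\textbf{y}_i$ involve $j$; so two cyclically consecutive zeros would require $m\le 4$ with a special configuration. I would check $m=4$ directly, using the fact that the two $\textbf{y}$'s involving $j$ are themselves consecutive.
\item \textbf{Two information errors at $l_1<l_2$.} This is the main obstacle, namely justifying Step~8. With $\textbf{e}_{k+i}=\textbf{0}$ one gets $\textbf{y}_i^T=(A_{il_2}+A_{il_1}A_{(i-1)l_1}^{-1}A_{(i-1)l_2})\textbf{e}_{l_2}^T$. In exponents this is $C^{\sigma(i,l_2)}(I+C^{\delta_i})$, where $\delta_i=\sigma(i,l_1)-\sigma(i,l_2)-\sigma(i-1,l_1)+\sigma(i-1,l_2)$. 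By the Zech relation \eqref{propriedade_zech}, $I+C^{\delta_i}=C^{Z(\delta_i)}$, so $\textbf{y}_i^T=C^{\sigma(i,l_2)+Z(\delta_i)}\textbf{e}_{l_2}^T$. Eliminating $\textbf{e}_{l_2}$ between consecutive indices gives $\textbf{y}_i^T=C^{r_{i-2}}\textbf{y}_{i-1}^T$, with the exponent matching \eqref{eq_r} after index bookkeeping. I would verify that bookkeeping carefully. So the true pair passes Step~8, and Step~9 is solvable uniquely by the $2\times2$ block invertibility. In this case Steps~6--7 do not fire at $l_1$, since each $\textbf{y}_i\ne\textbf{0}$ whenever $\textbf{e}_{l_2}\neq\textbf{0}$.
\end{enumerate}

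The delicate point is \emph{soundness}: the algorithm must not accept a wrong pair $(l_1',l_2')$ in Step~8, or a wrong $l_1$ earlier. I would argue this through the MDS property. Any pair accepted at Step~8, followed by solving \eqref{sistema}, yields a weight-$\le2$ vector $\textbf{e}'$. The chain of relations for $i=3,\dots,m$ together with one system equation reproduces all $m$ syndromes, so $\textbf{e}'$ has the same syndrome as the true error. Then $\textbf{e}-\textbf{e}'$ is a codeword of weight at most $4<d$, hence zero. The same argument covers the earlier steps: any output of weight $\le 2$ that is consistent with all syndromes equals the true error. The remaining technical check is therefore that each branch's acceptance condition truly implies consistency with every $\textbf{s}_i$, and not merely with the two syndromes used to compute magnitudes. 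I expect this consistency check to be the hardest, especially for Step~8, where I would show the recursion propagates equality from $\textbf{s}_1,\textbf{s}_2$ to all $\textbf{s}_i$.
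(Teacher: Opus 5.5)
Your completeness analysis (Cases 1--4, the Zech elimination that produces the exponent $r_{i-2}$ of Step~8, and unique solvability of the $2\times 2$ block system) is the same as the paper's. Your MDS-uniqueness layer goes further than the paper in a useful way. It handles Steps 2, 3, 6 and 8 correctly. For Step 8, the chain forces $\textbf{y}_i^T=B_i\textbf{u}^T$ for all $i\ge 2$, where $\textbf{u}^T=B_2^{-1}\textbf{y}_2^T$ and $B_i$ is the coefficient of $\textbf{e}_{l_2}^T$ in your Case 4; this propagates agreement from $\textbf{s}_1$ to every $\textbf{s}_i$. The layer is also genuinely needed. In Case 2 with $l>1$, the algorithm accepts the pair $(1,l)$ at Step 8 with $\textbf{e}_1=\textbf{0}$ before it ever reaches $l_1=l$.

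The gap is Step 7, whose test is local. $(\textbf{y}_i,\textbf{y}_{i+1})=(\textbf{0},\textbf{0})$ only says that $\textbf{s}_{i-1},\textbf{s}_i,\textbf{s}_{i+1}$ are consistent with a single error at $l_1$. So your planned ``remaining technical check'' that every acceptance condition implies consistency with all $\textbf{s}_i$ is false for this branch, and the weight-$\le 4$ uniqueness argument cannot be invoked. Your substitute argument for Case 3 is also wrong. For $l_1\neq l$ the only $\textbf{y}$'s that can vanish are $\textbf{y}_j$ and $\textbf{y}_{j+1}$, and these are cyclically consecutive for \emph{every} $m$; nothing confines the danger to $m\le 4$. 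In Case 4 you give no argument at all for Step 7 at values of $l_1$ below the first true error position.

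A local minor argument closes the gap. In Case 3, $\textbf{y}_j=\textbf{0}$ gives $\textbf{e}_{k+j}^T=(A_{jl}+A_{jl_1}A_{(j-1)l_1}^{-1}A_{(j-1)l})\textbf{e}_l^T$. Substituting into $\textbf{y}_{j+1}$ yields
\[
\textbf{y}_{j+1}^T=\bigl(A_{(j+1)l}+A_{(j+1)l_1}A_{(j-1)l_1}^{-1}A_{(j-1)l}\bigr)\textbf{e}_l^T\neq\textbf{0},
\]
because the $2\times 2$ minor of $A$ on rows $j-1,j+1$ (indices mod $m$, distinct since $m\ge 3$) and columns $l,l_1$ is nonzero.

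In Case 4, let the true positions be $\lambda_1<\lambda_2$ and suppose $l_1<\lambda_1$. Two consecutive zeros give $\textbf{s}_r^T=A_{rl_1}\textbf{f}^T$ for $r\in\{i-1,i,i+1\}$, where $\textbf{f}^T=A_{(i-1)l_1}^{-1}\textbf{s}_{i-1}^T$. The $3\times 3$ block submatrix of $\psi(A)$ on these rows and on columns $l_1,\lambda_1,\lambda_2$ is nonsingular, so $\textbf{e}_{\lambda_1}=\textbf{e}_{\lambda_2}=\textbf{0}$, a contradiction. This uses a $3\times 3$ minor of $A$, which is more than the two facts you said you would rely on, though it follows from the superregularity (equivalently the MDS property) you already invoke. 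With these two additions your argument is complete, and more complete than the paper's, which never checks that earlier steps cannot misfire.
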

	
	\begin{proof}
		Since $n = m + k$, symbol errors in the received vector
		$\textbf{v} = [\textbf{v}_1, \ldots, \textbf{v}_n]$ may affect either the
		$k$ information symbols or the $m$ parity symbols. All such cases are
		considered below.

		
		\begin{itemize}
			\item \label{caso1} \textit{ \textbf{Case 1: One or two errors in parity symbols}}. If one or two errors occur in parity symbols, then $m-1$ or $m-2$
			syndromes are zero, respectively. If the errors occur at positions
			$k+j_1$ and $k+j_2$, then the nonzero syndromes are $\textbf{s}_{j_1}$
			and $\textbf{s}_{j_2}$, respectively.
			
			So, the magnitudes of the  errors are given by
			\begin{equation}
				\textbf{e}_{k+j_1} = \textbf{s}_{j_1} \;\; \mbox{and}  \;\; \textbf{e}_{k+j_2} = \textbf{s}_{j_2}.
			\end{equation}

			
			\item \label{caso2}  \textit{\textbf{Case 2: One error in an information symbol.}} Suppose that a single error occurred in an information symbol at position $l_1$. Then, by (\ref{sindrome}), the syndromes are given by 
			\begin{equation}   \textbf{s}_i^T=A_{il_1}\textbf{e}_{l_1}^T, 
			\end{equation}
			for all $i=1, \ldots , m$. Replacing in (\ref{calculoy}), we have $   y_i^T= 0$, for all $i=1,..,m$, because            \begin{equation} \label{erro}
				\textbf{e}_{l_1}^T= A_{1l_1}^{-1}\textbf{s}_1^T= A_{2l_1}^{-1} = \ldots = A_{ml_1}^{-1}\textbf{s}_m^T,
			\end{equation}
			and the magnitude of the  error is obtained using any equality from Equation (\ref{erro}).

			
			\item  \label{caso3} \textit{\textbf{Case 3: One error in an information symbol and one error in a parity symbol.}}
			Assume that the error in the information symbol occurred at position $l_1$ and, without loss of generality, that the error in the parity symbol occurred at position $k+1$.  Then, by (\ref{sindrome}), the syndromes are given by
			\begin{equation}
				\textbf{s}_1^T= A_{1l_1}\textbf{e}_{l_1}^T + \textbf{e}_{k+1} \ \ \ \  \mbox{and}  \ \ \ \  \textbf{s}_i^T=A_{il_1}\textbf{e}_{l_1}^T, \;  
			\end{equation}
			for all $i=2,\ldots,m$. In this case, we will have $\textbf{y}_1,\textbf{y}_2 \neq\textbf{0}$, since $\textbf{s}_1$ is involved in the computation of these vectors, and $\textbf{y}_3=\ldots=\textbf{y}_m=0$.
			The magnitude of the error in the information symbol is obtained from any syndrome equation with $i=2,\ldots,m$, namely,
			\begin{equation}
				\textbf{e}_{l_1}^T = A_{il_1}^{-1}\textbf{s}_i^T.
			\end{equation}
			Substituting this value into the first syndrome equation, the magnitude of the error in the parity symbol is given by
			\begin{equation}
				\textbf{e}_{k+1} = \textbf{s}_1^T - A_{1l_1}\textbf{e}_{l_1}^T.
			\end{equation}

			\item \label{caso4} \textit{\textbf{Case 4: Two errors in information symbols.}} Suppose now that the two errors in the information symbols are at positions $l_1$ and $l_2$. By (\ref{sindrome}), the syndromes are calculated by
			\begin{equation}\label{sindrome_2e}
				\textbf{s}_i^T=A_{il_1}\textbf{e}_{l_1}^T + A_{il_2}\textbf{e}_{l_2}^T,
			\end{equation}
			for $i=1,...,m$. Without loss of generality, suppose $l_1=1$. Replacing (\ref{sindrome_2e}) into the vectors $\textbf{y}_i$ given by (\ref{calculoy}), then $\textbf{y}_i^T \ne 0$, for all $i=1,\ldots,m$. Fixing $\textbf{y}_2$ we have 
			{\small       \begin{equation}
					\begin{aligned}
						\textbf{y}_2^T &= \textbf{s}_2^T + A_{2l_1}A_{1l_1}^{-1}\textbf{s}_1^T \\
						&= (A_{2l_1}\textbf{e}_{l_1}^T + A_{2l_2}\textbf{e}_{l_2}^T) + A_{2l_1}A_{1l_1}^{-1}(A_{1l_1}\textbf{e}_{l_1}^T + A_{1l_2}\textbf{e}_{l_2}^T) \\
						&= A_{2l_1}\textbf{e}_{l_1}^T + A_{2l_2}\textbf{e}_{l_2}^T + A_{2l_1}A_{1l_1}^{-1}A_{1l_1}\textbf{e}_{l_1}^T + A_{2l_1}A_{1l_1}^{-1}A_{1l_2}\textbf{e}_{l_2}^T \\
						&= A_{2l_2}\textbf{e}_{l_2}^T + A_{2l_1}A_{1l_1}^{-1}A_{1l_2}\textbf{e}_{l_2}^T \\
						&= (A_{2l_2}+A_{2l_1}A_{1l_1}^{-1}A_{1l_2})\textbf{e}_{l_2}^T.
					\end{aligned}
			\end{equation}}
			\noindent Since $A_{ij}=C^{\sigma(i,j)}$, then
			\begin{equation}
				\textbf{y}_2^T= (C^{\sigma(2,l_2)}+C^{\sigma(2,l_1)-\sigma(1,l_1)+\sigma(1,l_2)})\textbf{e}_{l_2}^T.
			\end{equation}
			Using the Zech logarithm defined in (\ref{definição_zech}) and the property (\ref{propriedade_zech}), we have
			\begin{equation} \label{y2}
				\textbf{y}_2^T= C^{\sigma(2,l_2)+Z(\sigma(2,l_1)-\sigma(1,l_1)+\sigma(1,l_2)-\sigma(2,l_2))}\textbf{e}_{l_2}^T,
			\end{equation} that is 
			\small \begin{equation} \label{errol2}
				\textbf{e}_{l_2}^T = C^{- \sigma(2,l_2) - Z(\sigma(2,l_1) - \sigma(1,l_1) + \sigma(1,l_2)-\sigma(2,l_2))} \textbf{y}_2^T.
			\end{equation}
			In the same way
			\begin{equation} \label{y3}
				\textbf{y}_3^T= C^{\sigma(3,l_2)+Z(\sigma(3,l_1)-\sigma(2,l_1)+\sigma(2,l_2)-\sigma(3,l_2))}\textbf{e}_{l_2}^T.
			\end{equation}
			Replacing (\ref{errol2}) into (\ref{y3}), we have
			\begin{equation}\label{demy3}
				\textbf{y}_3^T = C^{r_1}\textbf{y}_2^T,
			\end{equation}
			where 
			\begin{equation}\label{demr1}
				\begin{aligned}
					r_1 &= \sigma(3,l_2) - \sigma(2,l_2) + Z(\sigma(3,l_1) - \sigma(3,l_2) - \sigma(2,l_1) + \sigma(2,l_2)) \\ 
					&- Z(\sigma(2,l_1) - \sigma(2,l_2) - \sigma(1,l_1) + \sigma(1,l_2)).
				\end{aligned}
			\end{equation}
			
			Similarly,
			\begin{equation}\label{demyi}
				\textbf{y}_i^T=C^{r_{i-2}}\textbf{y}_{i-1}^T,
			\end{equation}
			for all $i=3,\ldots,m$, where
			\begin{equation}\label{demri}
				\begin{aligned}
					r_{i-2} &= \sigma(i,l_2) - \sigma(i-1,l_2) + Z(\sigma(i,l_1) - \sigma(i,l_2) 
					- \sigma(i-1,l_1)   \\ 
					& + \sigma(i-1,l_2)) - Z(\sigma(i-1,l_1)  - \sigma(i-1,l_2) - \sigma(i-2,l_1) \\
					& + \sigma(i-2,l_2)).
				\end{aligned}
			\end{equation}
			By ensuring that Equation (\ref{demyi}) hold for all $i=3,\ldots,m$, it
			follows that the expressions $\textbf{s}_i$ are valid, that is, the
			symbol errors indeed occurred at positions $l_1$ and $l_2$. To determine
			the magnitudes of the errors, solve the linear system given in
			Equation (\ref{sistema}) using any two syndrome equations. 
			To determine the magnitudes of the errors, solve the linear system given in Equation~(\ref{sistema}) using any two syndrome equations.  Its coefficient matrix is a $2b\times 2b$ block submatrix of $\psi(A)$ and is nonsingular due to the block-superregularity induced by the superregularity of $A$. Consequently, the error magnitudes $e_{l_1}$ and $e_{l_2}$ are uniquely determined.
		\end{itemize}
	\end{proof}

	Below, we present examples illustrating Cases 3 and 4 of Algorithm \ref{algoritmo_dois_erros}, in order to better elucidate the behavior of the  algorithm using MDS array codes constructed from superregular Vandermonde and Cauchy matrices.
	

	\begin{example} \label{exemplo_caso_3}
		Consider $\mathcal{C}$ the MDS array code with parameters  $[10,5,6]$ over $\mathbb{F}_2^5$ constructed in Example \ref{Exemplo: construção H}. Assume that the received vector is $$\textbf{v}= [01101 \;\; 11101 \;\; 10110 \;\; 11110 \;\; 10101 \;\; 01011 \;\; 01000 \;\; 10100 \;\; 01111 \;\; 10011].$$ The syndromes from (\ref{sindrome})  are given by 
		\[
		\textbf{s}^T_1 = \left [
		\begin{array}{c}
			0 \\
			0 \\
			1 \\
			1 \\
			0 \\
		\end{array}
		\right ], \ \
		\textbf{s}^T_2 = \left [
		\begin{array}{c}
			0 \\
			1 \\
			1 \\
			0 \\
			0 \\
		\end{array}
		\right ], \ \ 
		\textbf{s}^T_3 = \left [
		\begin{array}{c}
			0 \\
			0 \\
			1 \\
			1 \\
			0 \\
		\end{array}
		\right ], \ \
		\textbf{s}^T_4 = \left [
		\begin{array}{c}
			0 \\
			0 \\
			0 \\
			1 \\
			1 \\
		\end{array}
		\right ], \ \
		\textbf{s}^T_5 = \left [
		\begin{array}{c}
			1 \\
			1 \\
			1 \\
			1 \\
			1 \\
		\end{array}
		\right ].
		\]
		Since all the syndromes are nonzero, we start by calculating the equations in (\ref{calculoy}) with $l_1=1$, yielding the following results:
		\[
		\textbf{y}^T_1 = \left [
		\begin{array}{c}
			1 \\
			1 \\
			1 \\
			1 \\
			0 \\
		\end{array}
		\right ], \ \
		\textbf{y}^T_2 = \left [
		\begin{array}{c}
			0 \\
			1 \\
			1 \\
			1 \\
			1 \\
		\end{array}
		\right ], \ \ 
		\textbf{y}^T_3 = \left [
		\begin{array}{c}
			0 \\
			0 \\
			0 \\
			0 \\
			0 \\
		\end{array}
		\right ], \ \
		\textbf{y}^T_4 = \left [
		\begin{array}{c}
			0 \\
			0 \\
			0 \\
			0 \\
			0 \\
		\end{array}
		\right ], \ \
		\textbf{y}^T_5 = \left [
		\begin{array}{c}
			0 \\
			0 \\
			0 \\
			0 \\
			0 \\
		\end{array}
		\right ].
		\]
		Note that in this case, we have three zero vectors, confirming an error at position $l_1 = 1$. The two nonzero vectors share the syndrome $\textbf{s}_1$. Therefore, the second error occurred in the position corresponding to the parity symbols. This example falls under Case 3. To correct it, we find the error at position $l_1 = 1$ using  (\ref{Passo7:erro_informação}) and the error in the position of the parity symbols using (\ref{Passo7:erro_paridade}). They are respectively given by
		\[
		\textbf{e}_{1}^T=A_{21}^{-1}\textbf{s}_2=
		\left [
		\begin{array}{c}
			1  \\
			1  \\
			0  \\
			0  \\
			0  \\
		\end{array}
		\right ] \ \ \mbox{and} \ \ \
		\textbf{e}_{6}^T=\textbf{s}_1 - A_{11}\textbf{e}_1 =
		\left [
		\begin{array}{c}
			1  \\
			1  \\
			1  \\
			1  \\
			0  \\
		\end{array}
		\right ].
		\]
		Therefore, the codeword is given by
		\[
		\begin{aligned}
			\textbf{c} = \textbf{v} - \textbf{e} & = [01101 \;\; 11101 \;\; 10110 \;\; 11110 \;\; 10101 \;\; 01011 \;\; 01000 \;\; 10100 \;\; 01111 \;\; 10011] \\
			&- [\textbf{11000} \;\; 00000 \;\; 00000 \;\; 00000 \;\; 00000 \;\; \textbf{11110} \;\; 00000 \;\; 00000 \;\; 00000 \;\; 00000] \\
			& = [10101 \;\; 11101 \;\; 10110 \;\; 11110 \;\; 10101 \;\; 10101 \;\; 01000 \;\; 10100 \;\; 01111 \;\; 10011].
		\end{aligned}
		\]           
		
	\end{example}

	\begin{example} \label{Exemplo: Cauchy 1}
		Let $p(x) = x^4 + x^3 + 1 \in \mathbb{F}_2[x]$ be a primitive polynomial. Through this polynomial it is possible to construct an MDS array code $\mathcal{C}$ with parameter $[8,4,5]$ over $\mathbb{F}_2^4$ capable for correcting up to two symbol errors of length $b=4$ by using a superregular matrix. For this example, we will use a superregular Cauchy  matrix $A$ and the Frobenius companion matrix $C$ to associated with $p(x)$, given, respectively, by
		\[
		A = \left [
		\begin{array}{c c c c}
			\alpha^{12} & \alpha^{11} & \alpha^{10} & \alpha^{9} \\
			\alpha & \alpha^{14} & \alpha^{10} & \alpha^{13} \\
			\alpha^{5} & 1 & \alpha^{13} & \alpha^{4}  \\
			\alpha^{9} & \alpha^{5} & \alpha^{14} & \alpha^{12}  \\
		\end{array}
		\right ] \ \ \mbox{and} \ \ \	C =
		\left [
		\begin{array}{c c c c }
			0 & 0 & 0  & 1 \\
			1 & 0 & 0  & 0 \\
			0 & 1 & 0  & 0 \\
			0 & 0 & 1  & 1 \\
		\end{array}
		\right ],
		\]
		where each entry of the matrix $A$ is computed according to
		Definition~\ref{defcauchy}, that is, as $(x_i + y_j)^{-1}$ in
		$\mathbb{F}_{2^4}$ and expressed in terms of powers of the primitive element $\alpha$. For instance, since $x_1 + y_1 = \alpha^3$, we have $a_{11} = (\alpha^3)^{-1} = \alpha^{12}$. This representation is used for convenience in subsequent computations.
		So, according to (\ref{equação_matrizH}), the parity-check matrix is given by
		\[H = \left [
		\begin{array}{c c c c | c}
			C^{12} & C^{11} & C^{10} & C^9    &          \\
			C^{11} & C^{12} & C^5    & C^7    &  I_{16} \\
			C^5    & C^{10} & C^{11} & C^{4}  &          \\
			C      & C^4    & C^{9}  & C^{10} & 
		\end{array}
		\right ]. \] 
		Now, suppose we receive the vector $$\textbf{v}= [1011  \;\; 0101 \;\; 0111 \;\; 1001 \;\; 1000 \;\; 1011 \;\; 0111 \;\; 1001].$$ The syndromes from (\ref{sindrome}) are given by
		\[
		\textbf{s}^T_1 = \left [
		\begin{array}{c}
			0 \\
			1 \\
			1 \\
			1 \\
		\end{array}
		\right ],
		\textbf{s}^T_2 = \left [
		\begin{array}{c}
			0 \\
			1 \\
			1 \\
			0 \\
		\end{array}
		\right ],
		\textbf{s}^T_3 = \left [
		\begin{array}{c}
			0 \\
			0 \\
			0 \\
			0 \\
		\end{array}
		\right ],
		\textbf{s}^T_4 = \left [
		\begin{array}{c}
			0 \\
			1 \\
			1 \\
			1 \\
		\end{array}
		\right ].
		\]	
		As only one of the syndromes is zero, considering $l_1=1$, we follow the algorithm by calculating the vectors expressed in Equation (\ref{calculoy}),
		\[
		\textbf{y}_1^T = \left [
		\begin{array}{c}
			1 \\
			1 \\
			1 \\
			0 \\
		\end{array}
		\right ],
		\textbf{y}_2^T= \left [
		\begin{array}{c}
			1 \\
			0 \\
			0 \\
			0 \\
		\end{array}
		\right ], 
		\textbf{y}_3^T = \left [
		\begin{array}{c}
			1 \\
			1 \\
			1 \\
			0 \\
		\end{array}
		\right ],
		\textbf{y}_4^T = \left [
		\begin{array}{c}
			0 \\
			1 \\
			1 \\
			1 \\
		\end{array}
		\right ].
		\]
		Since all of them are nonzero, errors occurred in the information symbols. So, this example falls on Case 4. The algorithm is followed by assuming a second error in $l_2=2$.  So we have to test the Equation (\ref{eq_y}) through the calculations of $r_{i-2}$ as in (\ref{eq_r}), for $i=3,4$: 
		\[
		\begin{aligned}
			r_1 &= \sigma(3, 2) - \sigma(2, 2)                                   
			+ Z(\sigma(3, 1) - \sigma(3, 2) - \sigma(2, 1) + \sigma(2, 2))  
			- Z(\sigma(2, 1) - \sigma(2, 2)  \\ & - \sigma(1, 1)+ \sigma(1, 2))  = 10 - 12 +Z(5 - 10 - 11 + 12) -Z( 11 - 12 - 12 + 11)
			\\ &= -2 + Z(-4) - Z(-2) = 5,
		\end{aligned}
		\]
		and
		\[
		\begin{aligned}
			r_2 &= \sigma(4, 2) - \sigma(3, 2) 
			+ Z(\sigma(4, 1) - \sigma(4, 2) - \sigma(3, 1) + \sigma(3, 2)) 
			- Z(\sigma(3, 1) - \sigma(3, 2) \\ & - \sigma(2, 1)+ \sigma(2, 2)) = 4 - 10 +Z(1 - 4 - 5 + 10) - Z(5 - 10 - 11 + 12)
			\\ & = -6 + Z(2) - Z(-4) = -6 +9 -14 = -11.
		\end{aligned}
		\]
		Then   
		\[\textbf{y}_3^T \neq C^5  \textbf{y}_2^T \ \ \ \mbox{and} \ \ \  \textbf{y}_4^T \neq C^{-11}  \textbf{y}_3^T.\]
		\noindent As for $l_2=2$ the above equations fail we test to $l_2=3$,
		\[   \begin{aligned}
			r_1 &= \sigma(3, 3) - \sigma(2, 3)                                   
			+ Z(\sigma(3, 1) - \sigma(3, 3) - \sigma(2, 1) + \sigma(2,3))  
			- Z(\sigma(2, 1) - \sigma(2, 3)  \\ & - \sigma(1, 1)+ \sigma(1, 3))  = 11 - 5 +Z(5 - 11 - 11 + 5) -Z( 11 - 5 - 12 + 10)
			\\ &= 6 + Z(-12) - Z(4) = 7
		\end{aligned}
		\]	
		and
		\[
		\begin{aligned}
			r_2 &= \sigma(4, 3) - \sigma(3, 3) 
			+ Z(\sigma(4, 1) - \sigma(4, 3) - \sigma(3, 1) + \sigma(3, 3)) 
			- Z(\sigma(3, 1) - \sigma(3, 3) \\ & - \sigma(2, 1)+ \sigma(2, 3)) = 9 - 11 +Z(1 - 9 - 5 + 11) - Z(5 - 11 - 11 + 5)
			\\ & = -2 + Z(-2) - Z(-12) = -2 +7 -4 = 1.
		\end{aligned}
		\]
		Then we confirm that
		\[ \textbf{y}_3^T  = C^7 \textbf{y}_2^T \ \ \ \mbox{and} \ \ \ \textbf{y}_4^T  = C^{1}   \textbf{y}_3^T.\]
		\noindent Therefore, there are errors in positions $l_1=1$ and $l_2=3$.  In order to obtain the magnitude of the errors we solve the linear system from (\ref{sistema}) and we obtain the errors
		\[\textbf{e}_{1}= [01  1  0] \ \ \ \mbox{and} \ \ \ \textbf{e}_{3}= [1  1  1  0].\]
		And so, the codeword is
		\[
		\begin{aligned}
			\textbf{c} & = [1011  \;\; 0101 \;\; 0111 \;\; 1001 \;\; 1000 \;\; 1011 \;\; 0111 \;\; 1001] \\
			&- [\textbf{0110} \;\; 0000 \;\; \textbf{1110} \;\; 0000 \;\; 0000 \;\; 0000 \;\; 0000 \;\; 0000] \\
			& = [1101  \;\; 0101 \;\; 1001 \;\; 1001 \;\; 1000 \;\; 1011 \;\; 0111 \;\; 1001].
		\end{aligned}
		\]
	\end{example}
	
	One of the main computational challenges of Algorithm~\ref{algoritmo_dois_erros}, as analyzed in Theorem~\ref{casos2erros}, arises in Case~4, where it is necessary to compute the quantities in (\ref{demri}) in order to substitute them into (\ref{demyi}). The following corollary provides a direct expression for these terms when $A$ is a superregular Vandermonde matrix.

	\begin{corollary}\label{corollary}
		If $A$ is superregular a Vandermonde matrix and $H = \big[\psi(A) \ | \  I_{(n-k)b}\big]$ follows the construction in Theorem~\ref{teoSara}, then Equation (\ref{demri}) simplifies to
		\begin{equation}
			\label{eq:l2}
			r_{i-2}=l_2,
		\end{equation}
		for all $i= 3, 4, \ldots, m$, where $l_2$ is the position of the second error.
	\end{corollary}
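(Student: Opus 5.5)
The plan is to substitute the Vandermonde exponent structure directly into (\ref{demri}) and observe that the two Zech-logarithm terms coincide and cancel. I take the superregular Vandermonde matrix in the form used in Example~\ref{Exemplo: construção H}: the evaluation points are $\alpha_j=\alpha^{j}$, so $a_{ij}=\alpha^{(i-1)j}$. Hence $\sigma(i,j)=(i-1)j$, read modulo $2^b-1$, the order of $C$.

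The computation has three steps, each carried out with indices $i,i-1,i-2$, where $3\le i\le m$.
\begin{enumerate}
\item The leading term is $\sigma(i,l_2)-\sigma(i-1,l_2)=(i-1)l_2-(i-2)l_2=l_2$.
\item The argument of the first Zech logarithm is
\[
\sigma(i,l_1)-\sigma(i,l_2)-\sigma(i-1,l_1)+\sigma(i-1,l_2)=(i-1)(l_1-l_2)-(i-2)(l_1-l_2)=l_1-l_2 .
\]
\item The argument of the second Zech logarithm is, in the same way, $(i-2)(l_1-l_2)-(i-3)(l_1-l_2)=l_1-l_2$.
\end{enumerate}
The two Zech terms are therefore $Z(l_1-l_2)-Z(l_1-l_2)=0$, which leaves $r_{i-2}=l_2$ for every $i$.

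The main point needing care is that these Zech logarithms are actually defined, i.e.\ $1+\alpha^{l_1-l_2}\neq 0$. This is exactly where superregularity enters. The $2\times 2$ minor of $A$ in rows $1,2$ and columns $l_1,l_2$ equals $\alpha^{l_2}-\alpha^{l_1}$, and it is nonzero, so $\alpha^{l_1}\neq\alpha^{l_2}$. Equivalently, $1+\alpha^{l_1-l_2}\neq 0$ in characteristic $2$, so $Z(l_1-l_2)$ exists. With that, the cancellation above is legitimate.

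As a sanity check independent of Zech notation, I would also verify the result directly from the derivation in Case 4 of Theorem~\ref{casos2erros}. There
\[
\textbf{y}_i^T=\bigl(C^{(i-1)l_2}+C^{(i-1)l_1-(i-2)l_1+(i-2)l_2}\bigr)\textbf{e}_{l_2}^T
= C^{(i-2)l_2}\bigl(C^{l_2}+C^{l_1}\bigr)\textbf{e}_{l_2}^T .
\]
Comparing consecutive indices gives $\textbf{y}_i^T=C^{l_2}\textbf{y}_{i-1}^T$, since powers of $C$ commute. This confirms (\ref{eq:l2}).

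If the evaluation points are general, $\alpha_j=\alpha^{t_j}$, the same computation gives $r_{i-2}=t_{l_2}$. The statement $r_{i-2}=l_2$ corresponds to the normalization $t_j=j$, which is the one used in the paper.
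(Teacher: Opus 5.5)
Your proposal is correct and follows essentially the paper's route: substitute the Vandermonde exponent structure into the formula for $r_{i-2}$, observe that the two Zech arguments coincide so those terms cancel, and read off $\sigma(i,l_2)-\sigma(i-1,l_2)=l_2$. You use the explicit form $\sigma(i,j)=(i-1)j$ where the paper uses the second-difference identity $\sigma(i,l)=2\sigma(i-1,l)-\sigma(i-2,l)$; your remarks that superregularity guarantees $1+\alpha^{l_1-l_2}\neq 0$, and that $r_{i-2}=l_2$ relies on the normalization $\alpha_j=\alpha^{j}$ (otherwise one gets $t_{l_2}$), make explicit points the paper leaves implicit.
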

	
	\begin{proof}
		Let $A$ be a Vandermonde matrix satisfying the superregularity conditions of Theorem~\ref{teoSara}. We will initially show that the calculation of Zech logarithms, from Equation (\ref{demri}), cancel out for all $i=3,\ldots,m$, that is,
		\begin{equation}\label{eq:z}
			\begin{aligned}
				& Z(\sigma (i,l_1)-\sigma (i,l_2)-\sigma (i-1,l_1)+\sigma (i-1,l_2)) = \\
				& Z(\sigma (i-1,l_1)-\sigma (i-1,l_2)-\sigma (i-2,l_1)+\sigma (i-2,l_2)).
			\end{aligned}    
		\end{equation}
		This is equivalent to showing that the arguments are equal,
		which implies that \begin{equation}\label{argumentozech}
			\sigma (i,l_1)-2\sigma (i-1,l_1)+\sigma (i-2,l_1)-\sigma (i,l_2)+2\sigma (i-1,l_2) - \sigma (i-2,l_2)=0.
		\end{equation}
		\noindent Since $A$ is a Vandermonde matrix, then by (\ref{vandermonde}),  if $l$ is any column of $A$, considering consecutive rows $i-2$, $i-1$, and $i$, then
		\begin{equation}\label{colunal}
			\sigma (i,l) =  2\sigma (i-1,l)-\sigma (i-2,l).
		\end{equation}   
		Replacing (\ref{colunal})  into (\ref{argumentozech}), we obtain that  the Zech logarithms indeed cancels out. Therefore, Equation (\ref{demri}) can be simplified to
		\begin{equation}
			\label{eq:sigma}
			\begin{split}
				r_{i-2}= \sigma (i,l_2)- \sigma (i-1,l_2),
			\end{split}
		\end{equation}
		for all $i=3,\ldots,m$. Again, due to $A$ being a Vandermonde matrix, if we consider two consecutive rows $i-1$ and $i$, and a fixed column $l_2$, it follows that the difference between the powers of $\alpha$ in $\sigma (i,l_2)$ and $\sigma (i-1,l_2)$ is equal to $l_2$, which provides us (\ref{eq:l2}), concluding the proof.
	\end{proof}
	
	In the following example we show how calculations are made easier by using the Corollary \ref{corollary}. 
	
	\begin{example} \label{ex_caso_4}
		Consider again $\mathcal{C}$ the MDS array code with parameters  $[10,5,6]$ over $\mathbb{F}_2^5$ constructed in Example \ref{Exemplo: construção H}.  Now, suppose that the received vector is \[\textbf{v}= [11001 \;\; 11101 \;\; 11100 \;\; 11110 \;\; 10101 \;\; 01011 \;\; 01000 \;\; 10100 \;\; 01111 \;\; 10011].\] The syndromes from (\ref{sindrome}) are given by 
		\[
		\textbf{s}^T_1 = \left [
		\begin{array}{c}
			0 \\
			0 \\
			1 \\
			1 \\
			0 \\
		\end{array}
		\right ], \ 
		\textbf{s}^T_2 = \left [
		\begin{array}{c}
			0 \\
			1 \\
			0 \\
			0 \\
			0 \\
		\end{array}
		\right ], \  
		\textbf{s}^T_3 = \left [
		\begin{array}{c}
			1 \\
			0 \\
			0 \\
			1 \\
			1 \\
		\end{array}
		\right ], \ 
		\textbf{s}^T_4 = \left [
		\begin{array}{c}
			1 \\
			1 \\
			1 \\
			0 \\
			1 \\
		\end{array}
		\right ], \ 
		\textbf{s}^T_5 = \left [
		\begin{array}{c}
			1 \\
			1 \\
			1 \\
			1 \\
			0 \\
		\end{array}
		\right ].
		\]
		Since all syndromes are nonzero, conducting the calculations in (\ref{calculoy}), initially with $l_1 = 1$, we obtain
		\[
		\textbf{y}^T_1 = \left [
		\begin{array}{c}
			0 \\
			1 \\
			1 \\
			1 \\
			0 \\
		\end{array}
		\right ], \ 
		\textbf{y}^T_2 = \left [
		\begin{array}{c}
			0 \\
			1 \\
			0 \\
			1 \\
			1 \\
		\end{array}
		\right ], \ 
		\textbf{y}^T_3 = \left [
		\begin{array}{c}
			1 \\
			0 \\
			1 \\
			1 \\
			1 \\
		\end{array}
		\right ], \ 
		\textbf{y}^T_4 = \left [
		\begin{array}{c}
			0 \\
			1 \\
			0 \\
			1 \\
			0 \\
		\end{array}
		\right ], \ 
		\textbf{y}^T_5 = \left [
		\begin{array}{c}
			0 \\
			1 \\
			1 \\
			1 \\
			0 \\
		\end{array}
		\right ].
		\]
		By Corollary \ref{corollary} all values of $r_{i-2}$ is equal to $l_2$, that is, $r_1 = r_2 = r_3 = 2$.  Then, 
		\[\textbf{y}_i^T \neq C^2 \textbf{y}_{i-1}^T,\]
		for all $i=3,4,5$.
		As the above equalities fail, the algorithm follows for $l_2=3$, where again by using Corollary \ref{corollary}, we have $r_1 = r_2 = r_3 = 3$. Then, we check that
		\[\textbf{y}_i^T  = C^3 \textbf{y}_{i-1}^T,\]
		for all $i=3,4,5$.
		Therefore, by (\ref{sistema}), we obtain the magnitude of the errors $\textbf{e}_1= [01100]$ and $\textbf{e}_3= [01010]$. Thus, the correct codeword is
		\begin{equation*} \small
			\begin{aligned}
				\textbf{c} = \textbf{v} - \textbf{e} & = 
				[11001 \; 11101 \; 11100 \; 11110 \; 10101 \; 10101 \; 01000 \; 10100 \; 01111 \; 10011] \\
				&-  [\textbf{01100} \; 00000 \; \textbf{01010} \; 00000 \; 00000 \; 00000 \; 00000 \; 00000 \; 00000 \; 00000] \\
				& = [10101 \; 11101 \; 10110 \; 11110 \; 10101 \; 10101 \; 01000 \; 10100 \; 01111 \; 10011].
			\end{aligned}
		\end{equation*}
		
	\end{example}
	
	We note that the focus of this work is on the algebraic construction and decoding capability of MDS array codes, rather than on a detailed complexity analysis. Nevertheless, the proposed decoding procedures involve operations over $\mathbb{F}_q^b$ whose computational cost is dominated by syndrome evaluations and the solution of small linear systems. For moderate values of $m$ and $k$, these operations remain efficient and suitable for practical implementation.
	In particular, when Vandermonde matrices are employed, the decoding expressions simplify significantly, reducing the number of required operations. 


	
	\subsection{Three Symbol Errors} \label{seção_correção_três_erros}

	In this subsection, we analyze the extension of the proposed decoding framework to the case of three symbol errors. The different error configurations include the cases where all three
	errors occur in parity symbols, where one error occurs in an information symbol and two in parity symbols, and where two errors occur in information symbols and one in a parity symbol. These cases can be treated by arguments analogous to those developed for the two-error case in Algorithm~\ref{algoritmo_dois_erros}, and their derivations follow
	similar algebraic steps. For this reason, we focus on the most representative and technically involved scenario, namely when all three errors occur in the information symbols.
	
	Following a procedure similar to the proof of
	Case~4 in Theorem~\ref{casos2erros}, let $l_1$, $l_2$, and $l_3$ denote
	the positions of the three errors in the information symbols. The
	corresponding syndromes are then given by
	\begin{equation}\label{sindromes_3_erros}
		\textbf{s}_i = A_{il_1} \textbf{e}_{l_1}^T + A_{il_2} \textbf{e}_{l_2}^T + A_{il_3} \textbf{e}_{l_3}^T,
	\end{equation}
	for $i=1,...,m$. Without loss of generality, suppose $l_1=1$.
	Replacing (\ref{sindromes_3_erros}) in vectors $\textbf{y}_i$ as given in (\ref{calculoy}), then $\textbf{y}_i^T \ne \textbf{0}$, for all $i=1,\ldots,m$.  For $i=2,...,m$, we have
	{\small \begin{equation}
			\begin{aligned}
				\textbf{y}_i^T &= \textbf{s}_i^T + A_{i{l_1}}A_{(i-1){l_1}}^{-1}\textbf{s}_{i-1}^T \\
				&=\left(A_{il_2}+A_{il_1}A_{(i-1)l_1}^{-1}A_{(i-1)l_2}\right)\textbf{e}_{l_2}^T+\left(A_{il_3}+A_{il_1}A_{(i-1)l_1}^{-1}A_{(i-1)l_3}\right)\textbf{e}_{l_3}^T \\
				&= \left(C^{\sigma(i, l_{2})} + C^{\sigma(i, l_{1}) - \sigma(i-1, l_{1}) + \sigma(i-1, l_{2})}\right)\textbf{e}_{l_{2}}^{T} + \left(C^{\sigma(i, l_{3})} + C^{\sigma(i, l_{1}) - \sigma(i-1, l_{1}) + \sigma(i-1, l_{3})}\right)\textbf{e}_{l_{3}}^{T} \\
				&= C^{r_{i1}}\textbf{e}_{l_2}^T+ C^{r_{i2}}\textbf{e}_{l_3}^T,
			\end{aligned}
	\end{equation}}
	
	\noindent  where
	\begin{equation}\label{ri1}
		r_{i1}=\sigma(i,l_2)+Z\left(\sigma(i,l_1)-\sigma(i,l_2)-\sigma(i-1,l_1)+\sigma(i-1,l_2)\right)
	\end{equation}
	and
	\begin{equation}\label{ri2}
		r_{i2}=\sigma\left(i,l_3)+Z(\sigma(i,l_1)-\sigma(i,l_3)-\sigma(i-1,l_1)+\sigma(i-1,l_3)\right).
	\end{equation}
	Let us consider three consecutive indices $i-2$, $i-1$, and $i$, for $i=4,...,m$. Isolating $\textbf{e}_{l_{2}}^{T}$ in $\textbf{y}_{i-2}^T$, we have:
	\begin{equation}
		\label{el2}
		\textbf{e}_{l_{2}}^{T} = \left(\textbf{y}_{i-2}^{T} -  C^{r_{(i-2)2}} \textbf{e}_{l_{3}}^{T}\right)C^{-r_{(i-2)1}}.
	\end{equation}
	Substituting (\ref{el2}) into $\textbf{y}_{i-1}^T$ and isolating $\textbf{e}_{l_{3}}^{T}$, then
	\begin{equation}\label{el3}
		\mathbf e_{l_3}^{T}
		=
		\left(\mathbf y_{i-1}^{T} - C^{\overline{r_{(i-1)1}}}\mathbf y_{i-2}^{T}\right)C^{-\overline{r_{(i-1)2}}}.
	\end{equation}
	where
	\begin{equation}\label{bari1}
		\overline{r_{(i-1)1}} =  r_{(i-1)1} - r_{(i-2)1}
	\end{equation}
	and
	\begin{equation}\label{bari2}
		\overline{r_{(i-1)2}} = r_{(i-1)2} + Z(r_{(i-1)1}+r_{(i-2)2}-r_{(i-2)1}-r_{(i-1)2}).
	\end{equation}
	Finally, substituting (\ref{el2}) and (\ref{el3}) into $\textbf{y}_{i-2}^T$, we have:
	\begin{equation}\label{relacaoy3erros}    \textbf{y}_i^T=C^{\widehat{r_{i1}}}\textbf{y}_{i-1}^T+C^{\widehat{r_{i2}}}\textbf{y}_{i-2}^T,
	\end{equation}
	where
	\begin{equation}\label{ri1hat}
		\widehat{r_{i1}} =
		r_{i2}
		- \overline{r_{(i-1)2}}
		+
		Z\big(
		r_{i1}+r_{(i-2)2}-r_{(i-2)1}-r_{i2}
		\big)
	\end{equation}
	and
	\begin{equation}\label{ri2hat}
		\begin{aligned}
			\widehat{r_{i2}} &= r_{i1}-r_{(i-2)1}+Z(-r_{(i-2)2}-\overline{r_{(i-1)1}}-\overline{r_{(i-1)2}}\\
			&\quad +Z(r_{i2}+Z(r_{i1}+r_{(i-2)2}-r_{(i-2)1}-r_{i2})+\overline{r_{(i-1)1}}-\overline{r_{(i-1)2}}-r_{i1}+r_{(i-2)1})).
		\end{aligned}
	\end{equation}
	
	By ensuring that Equation~(\ref{relacaoy3erros}) holds for all
	$i = 4, \ldots, m$, it follows that the syndrome expressions are valid,
	that is, the symbol errors indeed occurred at positions $l_1$, $l_2$,
	and $l_3$. To determine the magnitudes of the errors, we now solve a
	linear system with three equations involving the variables
	$\textbf{e}^T_{l_1}$, $\textbf{e}^T_{l_2}$, and $\textbf{e}^T_{l_3}$.
	
	The analysis of this case highlights the main structural aspects of the problem and illustrates how the proposed approach can be extended to handle additional errors, although this extension leads to a significant increase in the algebraic complexity of the involved expressions.
	
	The expressions~(\ref{ri1hat}) and~(\ref{ri2hat}) are highly complex to compute when an arbitrary superregular matrix $A$ is used, due to their dependence on multiple indices. Motivated by Corollary~\ref{corollary}, we present simplified expressions when $A$ is a Vandermonde matrix.
	
	\begin{corollary}\label{Vand3erros} If $A$ is a Vandermonde matrix, then the simplified forms of (\ref{ri1hat}) and (\ref{ri2hat}) are
		\begin{equation}\label{barrar41}
			\widehat{r_{i1}}=l_3-Z(l_2-l_3)+Z(2l_2-2l_3)
		\end{equation}
		and
		\begin{equation}\label{barrar42}
			\widehat{r_{i2}} =
			3l_2 - l_3 - Z(l_2 - l_3)
			+ Z\big(l_3 - l_2 - Z(l_2 - l_3) + Z(2l_2 - 2l_3)\big)
		\end{equation}
		for $i=4,\ldots,m$, where $l_2$ and $l_3$ are the second and third error positions.
		
	\end{corollary}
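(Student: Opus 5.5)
The plan is direct substitution, exactly as in the proof of Corollary~\ref{corollary}: encode the Vandermonde structure in the exponent function $\sigma$, then feed it through the chain (\ref{ri1})--(\ref{ri2}) $\to$ (\ref{bari1})--(\ref{bari2}) $\to$ (\ref{ri1hat})--(\ref{ri2hat}). For $A$ as in (\ref{vandermonde}) with the choice used in Example~\ref{Exemplo: construção H}, the entry in row $i$ and column $l$ is $\alpha^{(i-1)l}$. So I take $\sigma(i,l)=(i-1)l$, which satisfies (\ref{colunal}) and $\sigma(i,l)-\sigma(i-1,l)=l$.

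\textbf{Step 1: the first-level exponents.} Substituting into (\ref{ri1}) and (\ref{ri2}), the Zech arguments become $\sigma(i,l_1)-\sigma(i-1,l_1)-\bigl(\sigma(i,l_2)-\sigma(i-1,l_2)\bigr)=l_1-l_2$, and likewise $l_1-l_3$. Hence
\[
r_{i1}=(i-1)l_2+Z(l_1-l_2),\qquad r_{i2}=(i-1)l_3+Z(l_1-l_3).
\]
The only $i$-dependence is the linear part $(i-1)l_2$ or $(i-1)l_3$, and the Zech terms are constants.

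\textbf{Step 2: the barred exponents.} All differences of $r$'s at different row indices now depend only on the index gap. This gives $\overline{r_{(i-1)1}}=l_2$ from (\ref{bari1}). The Zech argument in (\ref{bari2}) is $(r_{(i-1)1}-r_{(i-2)1})-(r_{(i-1)2}-r_{(i-2)2})=l_2-l_3$. Hence $\overline{r_{(i-1)2}}=r_{(i-1)2}+Z(l_2-l_3)$.

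\textbf{Step 3: the formula (\ref{barrar41}).} In (\ref{ri1hat}), the first two terms give $r_{i2}-\overline{r_{(i-1)2}}=l_3-Z(l_2-l_3)$. The Zech argument is $(r_{i1}-r_{(i-2)1})-(r_{i2}-r_{(i-2)2})=2l_2-2l_3$. Together these yield (\ref{barrar41}).

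\textbf{Step 4: the formula (\ref{barrar42}).} The same substitution in (\ref{ri2hat}) gives the outer term $r_{i1}-r_{(i-2)1}=2l_2$. In the innermost Zech, the argument becomes
\[
r_{i2}+Z(2l_2-2l_3)+l_2-r_{(i-1)2}-Z(l_2-l_3)-2l_2 = l_3-l_2-Z(l_2-l_3)+Z(2l_2-2l_3),
\]
which is exactly the inner argument appearing in (\ref{barrar42}).

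This leaves the outer Zech term of (\ref{ri2hat}), whose argument contains $-r_{(i-2)2}-\overline{r_{(i-1)1}}-\overline{r_{(i-1)2}}$. Here the $r_{\cdot 2}$ terms appear with the same sign, so they do not cancel row by row. I expect this to be the main obstacle.

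My first move there is to return to the derivation of (\ref{relacaoy3erros}), i.e.\ substitute (\ref{el2}) and (\ref{el3}) into $\mathbf{y}_i^T$, and redo it in the Vandermonde case. There each coefficient $C^{r}+C^{r'}$ is combined with (\ref{propriedade_zech}) in the symmetric form $x+Z(y-x)$, with the Vandermonde exponents inserted at every stage. Writing each sum relative to a common reference exponent should make the $r_{(i-1)2}$ and $r_{(i-2)2}$ contributions appear as differences. After that I would use $Z(-n)=Z(n)-n$ to move the leftover $-l_2$ and $l_3$ pieces outside the outer Zech. The target is $3l_2-l_3-Z(l_2-l_3)$ outside and the displayed expression inside, which would give (\ref{barrar42}). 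Since the result is independent of $i$, it holds for all $i=4,\ldots,m$.

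A direct substitution into (\ref{ri2hat}) as printed will not close this step by itself. If the regrouping still leaves an $i$-dependent residue, that would signal a sign convention in (\ref{ri2hat}) that must be fixed against the underlying derivation before the simplification is valid.
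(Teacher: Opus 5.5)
There is a genuine gap: your fallback in Step~4 aims at a formula that is false, so it cannot close. Your Steps~1--3 are correct and follow the paper's own route. The Vandermonde differences $r_{i1}-r_{(i-1)1}=l_2$ and $r_{i2}-r_{(i-1)2}=l_3$ give $\overline{r_{(i-1)1}}=l_2$ and $\overline{r_{(i-1)2}}=r_{(i-1)2}+Z(l_2-l_3)$, and (\ref{barrar41}) follows.

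The obstacle you flag is also genuine. Substituted into (\ref{ri2hat}) as printed, the outer Zech argument keeps the term $-(2i-5)l_3-2Z(l_1-l_3)$. The paper's proof merely asserts that this $r_{(i-1)2}$-dependence is absorbed into the Zech terms and cancels, and it does not.

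Now redo the elimination from (\ref{el2})--(\ref{el3}). The coefficient of $\mathbf{y}_{i-2}^T$ comes out as $C^{r_{i1}-r_{(i-2)1}}+C^{\overline{r_{(i-1)1}}+\widehat{r_{i1}}}$. So the correct exponent is $r_{i1}-r_{(i-2)1}+Z\big(\overline{r_{(i-1)1}}+\widehat{r_{i1}}-r_{i1}+r_{(i-2)1}\big)$. In the Vandermonde case this is $2l_2+Z(w)$, where $w=l_3-l_2-Z(l_2-l_3)+Z(2l_2-2l_3)$ is exactly the inner argument you computed.

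In characteristic $2$ one has $Z(2x)=2Z(x)$, $Z(-x)=Z(x)-x$ and $Z(Z(x))=x$. Hence $w=Z(l_3-l_2)$, and the exponent equals $l_2+l_3$. This is as it must be: by your Step~1, $\mathbf{y}_j^T=C^{(j-1)l_2}\mathbf{u}_2^T+C^{(j-1)l_3}\mathbf{u}_3^T$ with $\mathbf{u}_2,\mathbf{u}_3$ independent of $j$. Therefore $\mathbf{y}_i^T=(C^{l_2}+C^{l_3})\mathbf{y}_{i-1}^T+C^{l_2+l_3}\mathbf{y}_{i-2}^T$.

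By the same identities, the right-hand side of (\ref{barrar42}) equals $2l_2-Z(l_2-l_3)$. This differs from $l_2+l_3$ by $-Z(l_3-l_2)\not\equiv 0 \pmod{2^b-1}$, because $1+\alpha^{l_3-l_2}\neq 1$. No regrouping can bridge that difference. Also, moving additive pieces out of the outer Zech is not a legal operation, since $Z(x+c)\neq Z(x)+c$ in general.

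The paper's Example~\ref{exemplo2_3erros} shows this concretely. With $p(x)=x^5+x^2+1$ one has $\alpha^{18}=1+\alpha$, so $Z(-13)=Z(18)=1$, not $19$. Hence (\ref{barrar42}) evaluates to $-13\equiv 18 \pmod{31}$. The value $5=l_2+l_3$ reported there is the correct coefficient: $\mathbf{y}_4^T=C^{-11}\mathbf{y}_3^T+C^{5}\mathbf{y}_2^T$ holds, whereas replacing $C^{5}$ by $C^{18}$ fails. But that $5$ is reached through a mis-evaluated Zech logarithm.

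Your closing hedge points in the right direction, but the defect is not a sign convention in (\ref{ri2hat}) whose repair would rescue (\ref{barrar42}). The target itself is wrong. What your method actually establishes is (\ref{barrar41}) together with $\widehat{r_{i2}}=l_2+l_3$, equivalently $2l_2+Z\big(l_3-l_2-Z(l_2-l_3)+Z(2l_2-2l_3)\big)$. That is what you should state and prove.
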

	\begin{proof}  Let $A$ be a Vandermonde matrix. If $i-2$, $i-1$ and $i$ are three consecutive rows and $l_j$ and $l_k$ are any two columns of $A$, then, as in the proof of Corollary \ref{corollary}, it is shown that
		\begin{equation}\label{provalk}
			\begin{aligned}
				l_k &=\sigma(i,l_k)+\sigma(i-1,l_k)+Z\big(\sigma(i,l_j)-\sigma(i,l_k)-\sigma(i-1,l_j)+\sigma(i-1,l_k)\big)\\ & +Z\big(\sigma(i-1,l_j)-\sigma(i-1,l_k)-\sigma(i-2,l_j)+\sigma(i-2,l_k)\big).
			\end{aligned}
		\end{equation}
		And, if $i-3$, $i-2$, $i-1$ and $i$ are four consecutive rows and $l_j$ and $l_k$ are any two columns of $A$, then it is shown that
		\begin{equation}\label{prova2lk}
			\begin{aligned}
				2l_k &=\sigma(i,l_k)+\sigma(i-2,l_k)+Z(\sigma(i,l_j)-\sigma(i,l_k)-\sigma(i-1,l_j)+\sigma(i-1,l_k))\\ & +Z\big(\sigma(i-2,l_j)-\sigma(i-2,l_k)-\sigma(i-3,l_j)+\sigma(i-3,l_k)\big),
			\end{aligned}
		\end{equation}
		since, in a Vandermonde matrix, the differences $\sigma(i,l_k)-\sigma(i-1,l_k)$ are constant and equal to $l_k$, and the Zech logarithm terms cancel pairwise when summing over consecutive rows, yielding the stated identities.
		Using (\ref{provalk}) and (\ref{prova2lk})  in (\ref{ri1}) and (\ref{ri2}), we obtain
		\begin{equation}\label{ri1barra}
			r_{i1}-r_{(i-1)1}=l_2 \ \ \mbox{and} \ \  r_{i2}-r_{(i-1)2}=l_3,
		\end{equation}
		and
		\begin{equation}\label{ri2barra}
			r_{i1}-r_{(i-2)1}=2l_2 \ \ \mbox{and} \ \  r_{i2}-r_{(i-2)2}=2l_3,
		\end{equation}
		for $i=4,\ldots,m$. Now, using (\ref{ri1barra}) and (\ref{ri2barra})  in (\ref{bari1}) and (\ref{bari2}), then 
		\begin{equation}\label{ribarraV}
			\overline{r_{(i-1)1}}=l_2 
			\quad \mbox{and} \quad  
			\overline{r_{(i-1)2}}=r_{(i-1)2}+Z(l_2-l_3),
		\end{equation}
		for $i=4,\ldots,m$. 
		Observe that the dependence on $r_{(i-1)2}$ in $\overline{r_{(i-1)2}}$ disappears after substitution into (\ref{ri1hat}) and (\ref{ri2hat}), since \eqref{ri1barra} allows us to rewrite $r_{(i-1)2}$ in terms of $r_{i2}$ and $l_3$, and the resulting terms are then absorbed into Zech logarithm expressions and cancel, yielding formulas depending only on $l_2$ and $l_3$.
		Finally, substituting (\ref{ribarraV}) into (\ref{ri1hat}) and (\ref{ri2hat}) and simplifying the resulting expressions, we obtain
		\begin{equation}
			\widehat{r_{i1}}=l_3-Z(l_2-l_3)+Z(2l_2-2l_3)
		\end{equation}
		and
		\begin{equation}
			\widehat{r_{i2}} =
			3l_2 - l_3 - Z(l_2 - l_3)
			+ Z\big(l_3 - l_2 - Z(l_2 - l_3) + Z(2l_2 - 2l_3)\big)
		\end{equation}
		for $i=4,\ldots,m$, which proves Corollary \ref{Vand3erros}. These expressions depend only on the error positions $l_2$ and $l_3$.
	\end{proof}

	Presented below is an example of correcting three symbol errors. Note that the decoding steps are the same as those in Algorithm~\ref{algoritmo_dois_erros}, with the adaptations presented above.

	\begin{example} \label{exemplo2_3erros}
		Consider the primitive polynomial $p(x)= x^5 + x^2 +1 \in \mathbb{F}_2[x]$, generating an MDS array code $\mathcal{C}$ with parameters $[11, 5, 7]$ over $\mathbb{F}_2^5$. The superregular Vandermonde matrix $A$ and the Frobenius companion matrix $C$ are given, respectively, by
		\[
		A = \left [
		\begin{array}{c c c c c}
			1 & 1 & 1 & 1 & 1 \\
			\alpha   & \alpha^2    & \alpha^3    & \alpha^4    & \alpha^5\\\
			\alpha^2 & \alpha^4    & \alpha^6    & \alpha^8    & \alpha^{10}\\
			\alpha^3 & \alpha^6    & \alpha^9    & \alpha^{12} & \alpha^{15}\\
			\alpha^4 & \alpha^8    & \alpha^{12} & \alpha^{16} & \alpha^{20}\\
			\alpha^5 & \alpha^{10} & \alpha^{15} & \alpha^{20} & \alpha^{25}\\
		\end{array}
		\right ] \ \ \mbox{and} \ \
		C = 
		\left [
		\begin{array}{c c c c c}
			0 & 0 & 0 & 0 & 1 \\
			1 & 0 & 0 & 0 & 0 \\
			0 & 1 & 0 & 0 & 1 \\
			0 & 0 & 1 & 0 & 0\\
			0 & 0 & 0 & 1 & 0
		\end{array}
		\right ]. 
		\]
		Thus
		\[   
		H = \left [
		\begin{array}{c c c c c | c}
			I_{5} & I_{5} & I_{5} & I_{5} & I_{5}  & \\
			C & C^2 & C^3 & C^4 & C^5  & \\
			C^2 & C^4 & C^6 & C^8 & C^{10}  &  I_{30} \\
			C^3 & C^6 & C^9 & C^{12} & C^{15} &  \\
			C^4 & C^8 & C^{12} & C^{16} & C^{20} & \\
			C^5 & C^{10} & C^{15} & C^{20} & C^{25}  &
		\end{array}
		\right ]
		\]
		is the parity-check matrix of the code $\mathcal{C}$. Assume the vector received is  \[\textbf{v}= [01011 \; 10010 \; 11100 \; 00100 \; 10001 \; 01110 \; 00111 \; 01101 \; 01001 \; 01010 \; 00001].\] The syndromes from (\ref{sindrome}) are given by
		
		\[
		\textbf{s}_1^T = \left [
		\begin{array}{c}
			1 \\
			1 \\
			1 \\
			1 \\
			0 \\
		\end{array}
		\right ],
		\textbf{s}_2^T= \left [ 
		\begin{array}{c}
			0 \\
			0 \\
			1 \\
			0 \\
			0 \\
		\end{array}
		\right ], 
		\textbf{s}_3^T = \left [ 
		\begin{array}{c}
			1 \\
			1 \\
			0 \\
			1 \\
			1 \\
		\end{array}
		\right ],
		\textbf{s}_4^T = \left [ 
		\begin{array}{c}
			0 \\
			0 \\
			1 \\
			0 \\
			0 \\
		\end{array}
		\right ],
		\textbf{s}_5^T = \left [ 
		\begin{array}{c}
			0 \\
			1 \\
			0 \\
			0 \\
			0 \\
		\end{array}
		\right ],
		\textbf{s}_6^T = \left [
		\begin{array}{c}
			0 \\
			1 \\
			0 \\
			0 \\
			1 \\
		\end{array}
		\right ].
		\]
		As all syndromes are nonzero, taking $l_1=1$ we calculate $\textbf{y}_i$ for $i=1,2,\ldots, 6$, as in Equation (\ref{calculoy}),
		\[
		\textbf{y}_1^T = \left [
		\begin{array}{c}
			0 \\
			1 \\
			1 \\
			0 \\
			1 \\
		\end{array}
		\right ],
		\textbf{y}_2^T= \left [ 
		\begin{array}{c}
			0 \\
			1 \\
			0 \\
			1 \\
			1 \\
		\end{array}
		\right ], 
		\textbf{y}_3^T = \left [ 
		\begin{array}{c}
			1 \\
			1 \\
			0 \\
			0 \\
			1 \\
		\end{array}
		\right ],
		\textbf{y}_4^T = \left [ 
		\begin{array}{c}
			1 \\
			1 \\
			1 \\
			0 \\
			1 \\
		\end{array}
		\right ],
		\textbf{y}_5^T = \left [ 
		\begin{array}{c}
			0 \\
			1 \\
			0 \\
			1 \\
			0 \\
		\end{array}
		\right ],
		\textbf{y}_6^T = \left [
		\begin{array}{c}
			0 \\
			1 \\
			1 \\
			0 \\
			1 \\
		\end{array}
		\right ].
		\]
		Note that at this point in the algorithm, we should check for the possibility of a second symbol error combined with $l_1=1$. Following the Algorithm \ref{algoritmo_dois_erros} we suppose a second symbol error at $l_2=2$ and compute $r_{i - 2}=2$ for $i= 3,4,5,6$ by using (\ref{eq:l2}), and then we verify that
		\[\textbf{y}_i^T \neq C^2 \textbf{y}_{i-1}^T,\]
		for all $i=3,4,5,6$. Since there is no verification if we consider all possibilities for $l_2$, we would have to assume that a third symbol error at $l_3=3$ and find the values such that 
		in (\ref{relacaoy3erros}), for $i= 4,5,6$. For (\ref{barrar41}) and (\ref{barrar42}) we compute
		\[
		\begin{split}
			\widehat{r_{i1}}=3-Z(-1)+Z(-2)=3-17+3=-11
		\end{split}
		\]
		and
		\[
		\begin{split}
			\widehat{r_{i2}}
			&= 3\cdot 2 - 3 - Z(-1)
			+ Z\big(3-2 - Z(-1) + Z(4-6)\big) \\
			&= 6-3 - Z(-1)
			+ Z\big(1 - Z(-1) + Z(-2)\big) = 3 - 17 + Z\big(1 - 17 + 3\big) \\
			&= 3 - 17 + Z(-13) = -14 + 19 = 5.
		\end{split}
		\]
		for $i=4,5,6$. Then, in fact
		\[\textbf{y}_i^T=C^{-11} \textbf{y}_{i-1}^T + C^{5} \textbf{y}_{i-2}^T,\]
		for all $i=4,5,6$. So, the position of the three symbol errors are $l_1=1$, $l_2=2$ and $l_3=3$. Solving the linear system we obtain the magnitude of the errors
		\[
		\textbf{e}_1=    [11010], \ \ \ \textbf{e}_2=[01010] \ \ \ \mbox{and} \ \ \ \textbf{e}_3=  	[01110].
		\]
		Therefore, the codeword is
		\[
		\begin{aligned}
			\textbf{c} & = [01011 \;\; 10010 \;\; 11100 \;\; 00100 \;\; 10001 \;\; 01110 \;\; 00111 \;\; 01101 \;\; 01001 \;\; 01010 \;\; 00001] \\
			& - [\textbf{11010} \; \textbf{01010} \; \textbf{01110} \;\; 00000 \;\; 00000 \;\; 00000 \;\; 00000 \;\; 00000 \;\; 00000 \;\; 00000 \;\; 00000] \\
			& = [10001 \;\; 11000 \;\; 10010 \;\; 00100 \;\; 10001 \;\; 01110 \;\; 00111 \;\; 01101 \;\; 01001 \;\; 01010 \;\; 00001].
		\end{aligned}
		\]
		
	\end{example}
	
	\section{Conclusion}\label{conclusao}
	
	In this paper, we investigated decoding algorithms for MDS array codes over $\mathbb{F}_q^b$ constructed from superregular matrices. We presented explicit decoding procedures capable of correcting one and two symbol errors without prior knowledge of their locations. These algorithms apply to a broad class of MDS array codes and rely on algebraic properties of superregular matrices. A relevant contribution is the simplification of the decoding procedure when Vandermonde superregular matrices are used, where closed-form expressions are obtained for the two-error scenario, reducing the number of algebraic operations and highlighting the practical relevance of such matrices in efficient decoding.
	
	The three-symbol-error case was examined as a partial extension of the proposed framework. We focused on the most algebraically demanding configuration, namely the case in which all three errors occur in information symbols. The remaining configurations can be treated by arguments analogous to those used in the two-error case and were therefore omitted. This analysis provides insight into the structural behavior of the decoding process for higher numbers of errors. Although the approach can, in principle, be extended to correct up to $\lfloor (d-1)/2 \rfloor$ symbol errors, this leads to a significant increase in algebraic complexity.
	
	From an application perspective, the proposed approach provides a flexible alternative to RAID~6 schemes, supporting general parameter configurations and enabling the correction of multiple symbol errors without location information. This flexibility, however, comes at the cost of increased algebraic complexity.
	
	Future work includes extending the decoding framework to handle a larger number of symbol errors, improving computational efficiency, and evaluating performance in distributed storage systems and erasure channels. In particular, a detailed comparison with classical decoding methods, such as those for Reed--Solomon codes, and the exploration of other families of superregular and block superregular matrices may provide further insight into the practical applicability of the proposed decoding algorithms.\\

	\textbf{Acknowledgments:}
	The authors gratefully acknowledge the financial support provided by the
	São Paulo Research Foundation (FAPESP) under grants 2024/05051-7, 2024/00923-6, 2017/17948-8, and 2013/25977-7, the Funding Authority for
	Studies and Projects (FINEP) under grant
	0527/18, and the National Council for Scientific
	and Technological Development (CNPq) under grant 405842/2023-6.
	
	\bibliographystyle{plain}
	
	\bibliography{sn-bibliography}
	
\end{document}